\newif\iffigures
\newtheorem{proposition}{Proposition}[section]
\newtheorem{property}{Property}[section]
\newcommand{\tr}{^{\mathrm T}}
\newcommand{\df}{\doteq}
\newcommand{\magn}[1]{\left\vert #1 \right\vert}
\newcommand{\QQ}{\mathsf{Q}}
\newcommand{\PP}{\mathsf{P}}
\newcommand{\qq}{\mathsf{q}}
\title{Regression Models for Output Prediction of Thermal Dynamics in Buildings\thanks{This work is part of the mpcEnergy project which is supported within the program Regionale Wettbewerbsf\"{a}higkeit O\"{O} 2007-2013 by the European Fund for Regional Development as well as the State of Upper Austria. The research reported in this article has been (partly) supported by the Austrian Ministry for Transport, Innovation and Technology, the Federal Ministry of Science, Research and Economy, and the Province of Upper Austria in the frame of the COMET center SCCH. An earlier version of part of this paper appeared in \cite{ChasparisNatschlaeger14}.}}
\author{Georgios C. Chasparis\thanks{Address all correspondence related to this paper to this author.}
    \affiliation{
	Software Competence Center Hagenberg GmbH\\
	Department of Data Analysis Systems\\
	Softwarepark 21\\
	4232 Hagenberg, Austria\\
    Email: georgios.chasparis@scch.at
    }	
}
\author{Thomas Natschlaeger
    \affiliation{
	Software Competence Center Hagenberg GmbH\\
	Department of Data Analysis Systems\\
	Softwarepark 21\\
	4232 Hagenberg, Austria\\
    Email: thomas.natschlaeger@scch.at
    }	
}
\begin{document}

\maketitle

\begin{abstract}
Standard (black-box) regression models may not necessarily suffice for accurate identification and prediction of thermal dynamics in buildings. This is particularly apparent when either the flow rate or the inlet temperature of the thermal medium varies significantly with time. To this end, this paper analytically derives, using physical insight, and investigates linear regression models with nonlinear regressors for system identification and prediction of thermal dynamics in buildings. Comparison is performed with standard linear regression models with respect to both a) identification error, and b) prediction performance within a model-predictive-control implementation for climate control in a residential building. The implementation is performed through the EnergyPlus building simulator and demonstrates that a careful consideration of the nonlinear effects may provide significant benefits with respect to the power consumption.
\end{abstract}


\section{Introduction}

The increased demand for electricity power and/or fuel consumption in residential buildings requires an efficient control design for all heating/cooling equipment. To this end, recently, there have been several efforts towards a more efficient climate control in residential buildings \cite{NghiemPappas11,Oldewurtel12,NghiemPappas13,TouretzkyBaldea13}. Efficiency of power/fuel consumption is closely related to the ability of the heating/cooling strategy to incorporate predictions of the heat-mass transfer dynamics and exogenous thermal inputs (e.g., outdoor temperature, solar radiation, user activities, etc.). Naturally this observation leads to model-predictive control (MPC) implementations (see, e.g., \cite{NghiemPappas11,Oldewurtel12}). The performance though of any such implementation will be closely related to the performance of the prediction models for the evolution of both the indoor temperature and the exogenous thermal inputs.

A common approach in the derivation of prediction models for the indoor temperature evolution relies on the assumption that the heat-mass transfer dynamics can be approximated well by linear models. For example, in the optimal supervisory control formulation introduced in \cite{NghiemPappas11} for controlling residential buildings with an heating-ventilation-air-conditioning (HVAC) system, a linear (state-space) model is considered. As pointed out in \cite{NghiemPappas11}, including the actual dynamics might be complicated for the derivation of optimal control strategies. Similar is the assumption in the dynamic game formulation of HVAC thermally controlled buildings in \cite{Coogan13}, where again a linear model is assumed for the overall system. Also, in references \cite{Rasmussen05,Maasoumy14}, detailed nonlinear representations of the heat-mass transfer dynamics (of a vapor-compression system in \cite{Rasmussen05} and for a building in \cite{Maasoumy14}) are linearized about an operating point to allow for a more straightforward controller design. 

Given this difficulty in incorporating nonlinear representations in the controller design, several efforts for identification of heat-mass transfer dynamics have adopted linear transfer functions, such as the standard ARX or ARMAX black-box model structures (cf.,~\cite{Ljung99}). Examples of such implementations include the MIMO ARMAX model in \cite{YiuWang07}, the MISO ARMAX model for HVAC systems in \cite{Scotton13} and the ARX model structure considered in \cite{Malisani10}.

On the other hand, the nonlinear nature of the heat-mass transfer dynamics in buildings has been pointed out by several papers, including the switched linear dynamics for modeling the intermittent operation of RH systems in \cite{NghiemPappas13}, the multiple-time scale analysis considered in \cite{Malisani10,TouretzkyBaldea13} and the physics-based representation of the air-conditioning systems in \cite{Rasmussen05}. In reference \cite{Malisani10}, a comparison is performed between standard linear ARX models with two time-scale transfer models, which according to the authors better represent thermal models. Indeed, the operation of either an radiant-heating (RH) system or an HVAC system usually takes place in a faster time-scale than the room dynamics, an observation that was also utilized for the derivation of a (nonlinear) model predictive controller in \cite{TouretzkyBaldea13} using singular-perturbation techniques.


The following questions naturally emerge: \emph{What is the effect of the nonlinearities of an RH or an HVAC system in the identification error of the indoor temperature?} \emph{Can more detailed model representations be utilized to reduce identification error and improve energy efficiency?} This paper begins with the observation that thermal dynamics in buildings are nonlinear in nature due to the operation of the RH and HVAC systems. We avoid introducing any multiple time-scale approximation as in \cite{TouretzkyBaldea13} and we adopt a detailed representation of the heat-mass transfer dynamics (using standard Bond-graph models). In principle, linear transfer models incorporating linear regressors, such as Output-Error (OE), ARX or ARMAX structures, may be sufficient for system identification in buildings. However, we wish to investigate whether this is a reasonable assumption and whether more accurate representations of the dynamics are necessary.

To this end, linear regression models with nonlinear regressors are derived analytically using physical insight for prediction of the indoor temperature in residential buildings with RH and/or HVAC systems. The derivation accounts for alternative information structures depending on the sensors available, which is the main contribution of this paper. The derived identification models are compared against standard (black-box) linear regression models trained with simulated data generated in EnergyPlus (V7-2-0) building simulator tool developed by the U.S. Department of Energy \cite{EnergyPlus}. Comparison is also performed with respect to the prediction performance of the derived models when used within a standard MPC for climate control in a residential building. The results indicate that a more detailed design of the thermal-mass transfer prediction models can be beneficial with respect to the energy consumption. This paper extends prior work of the same authors \cite{ChasparisNatschlaeger14}, since it extends the derivation of the regression models into a larger family of information structures (sensors available), while it provides a detailed comparison with respect to the energy consumption.

In the remainder of the paper, Section~\ref{sec:Setup} provides a description of the overall framework and the objective of this paper. Section~\ref{sec:LinearRegressionModels} provides a short introduction to a class of linear regression models (briefly LRM) for system identification of heat-mass transfer dynamics. Section~\ref{sec:NonLinearRegressionModels} analytically derives using physical insight linear regression models with nonlinear regressors (briefly NRM). In Section~\ref{sec:PerformanceEvaluation}, we compare the performance of the derived NRM with LRM with respect to both a) identification error, and b) prediction performance within a standard MPC for climate control. Finally, Section~\ref{sec:Conclusions} presents concluding remarks.

\emph{Notation:} 
\begin{itemize}
 \item[$\bullet$] ${\rm col}\{x_1,...,x_n\}$, for some real numbers $x_1,...,x_n\in\mathbb{R}$, denotes the \emph{column vector} $(x_1,...,x_n)$ in $\mathbb{R}^{n}$. Also, ${\rm row}\{x_1,...,x_n\}$ denotes the corresponding \emph{row vector}.
 \item[$\bullet$] ${\rm diag}\{x_1,...,x_n\}$ denotes the \emph{diagonal matrix} in $\mathbb{R}^{n\times{n}}$ with diagonal entries $x_1,...,x_n$.
 \item[$\bullet$] for any finite set $A$, $\magn{A}$ denotes the \emph{cardinality} of $A$.
 \item[$\bullet$] $\doteq$ denotes \emph{definition}.
\end{itemize}

\section{Framework \& Objective}		\label{sec:Setup}

When addressing system identification and control of heat-mass transfer dynamics in buildings, the assumed model of indoor temperature prediction should be a) accurate enough to capture the phenomena involved, and b) simple enough to accommodate a better control design. Given that prior literature has primarily focused on linear state-space or transfer models, we wish to address the following:
\begin{enumerate}
\item investigate the performance of standard (black-box) linear (regression) models (LRM) in the identification/prediction of heat-mass transfer dynamics in residential buildings;
\item derive analytically, using physical insight and under alternative information structures, an accurate representation of the dynamics using linear regression models with nonlinear regressors (NRM); 
\item provide a detailed comparison between the derived NRM with standard LRM and assess the potential energy saving.
\end{enumerate}

To this end, we consider residential buildings divided into a set of zones $\mathcal{I}$. Each zone is controlled independently in terms of heating using either a RH and/or an HVAC system. We will use the notation $i,j$ to indicate representative elements of the set of zones $\mathcal{I}$. The temperature of a zone $i$ is affected by the temperature of the neighboring zones and/or the outdoor environment, denoted $\mathcal{N}_i$, i.e., those zones which are adjacent to $i$, separated by some form of separator. 

In the remainder of this section, we provide some standard background on heat-mass transfer dynamics (based on which the derivation of the prediction models will be presented). We further discuss the considered assumptions with respect to the available sensors, and finally some generic properties of the dynamics.

\begin{figure*}[t!]
\centering
\includegraphics[scale=.45]{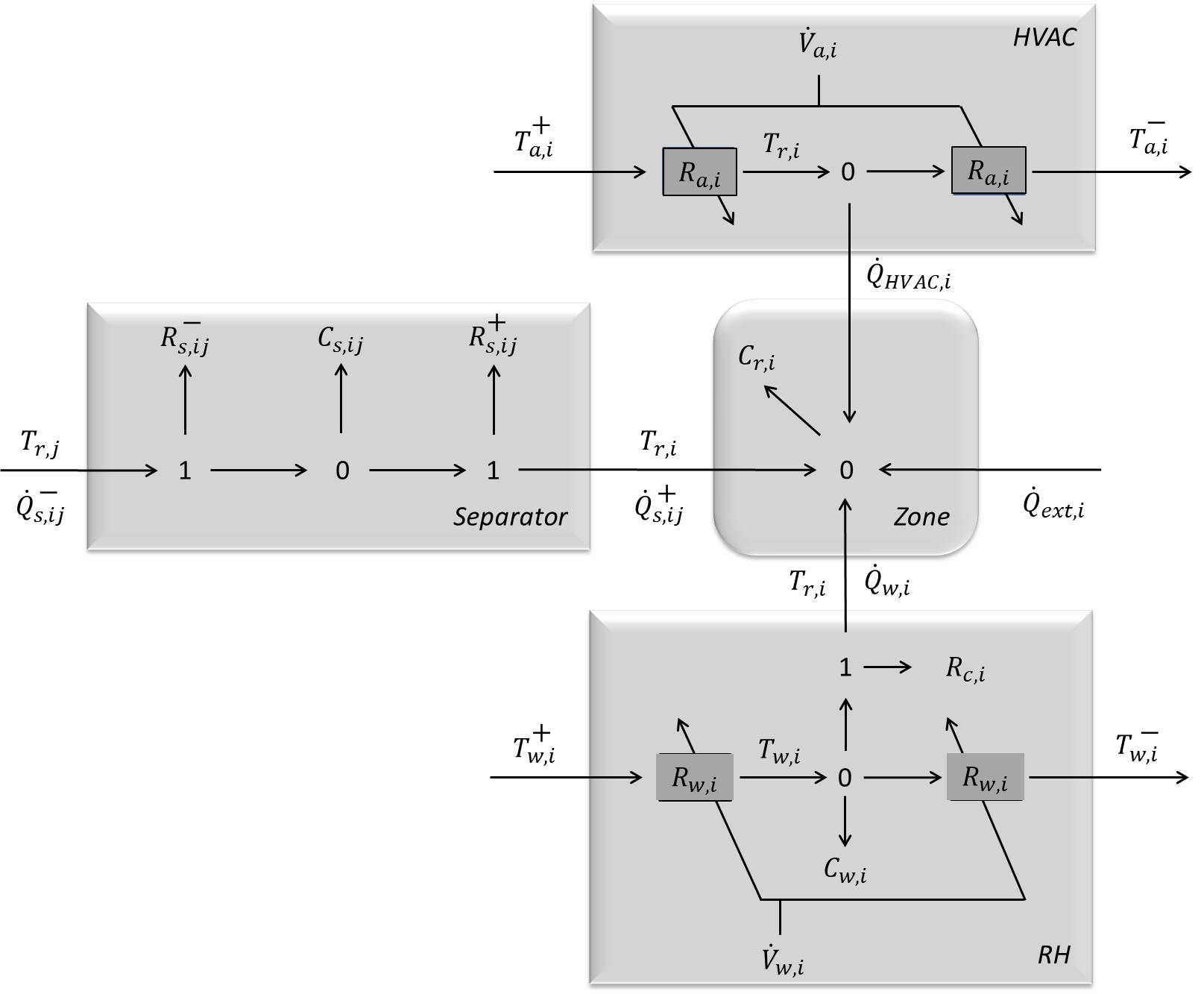}
\caption{Bond-graph approximation of heat-mass transfer dynamics of a thermal zone $i\in\mathcal{I}$.}
\label{fig:BondGraphModeling}
\end{figure*}

\subsection{Background: Heat-mass transfer dynamics}	\label{sec:HeatMassTransferDynamics}

We will use standard modeling techniques of heat-mass transfer dynamics for each thermal zone $i\in\mathcal{I}$, following a standard Bond-graph approach \cite{Karnopp12} and the simplified Fourier's law for heat conduction and Newton's law for heat convection (cf.,~\cite{Thirumaleshwar09}). Such bond-graph approximations of thermal dynamics are equivalent to a series of electrical resistance-capacitor (RC) connections. Figure~\ref{fig:BondGraphModeling} provides a detailed bond-graph approximation for a thermal zone $i\in\mathcal{I}$ with both RH and HVAC heating equipment.

We use $C$ to denote the \emph{thermal capacitance} of a separator (e.g., wall) or a thermal zone (e.g., room), and the symbol $R$ to denote the \emph{heat resistance}, according to an electricity equivalent terminology. We also use the symbol $\dot{Q}$ to denote \emph{heat flow} and the symbol $T$ to denote \emph{temperature}. 

The subscript $s$ denotes \textit{separator}-related parameters; the subscript $r$ denotes \emph{zone}-related parameters, and usually corresponds to a room; the subscript $w$ denotes parameters related to the RH system; and the subscript $a$ denotes HVAC-related parameters. Finally, a pair $ij$ denotes the interconnection between two neighboring zones $i,j\in\mathcal{I}$. 

Separators have been modeled as two thermal resistances separated by a capacitance, with the superscript ``$+$'' denoting the ``inside'' part of the separator, and the superscript ``$-$'' denoting the ``outside'' part of the separator. For example, $R_{s,ij}^{+}$ denotes the thermal resistance that it is adjacent to $i$. Furthermore, $T_{r,i}$, $i\in\mathcal{I}$, denotes the temperature of zone $i$ and $T_{s,ij}$, $j\in\mathcal{N}_i$, denotes the (internal) temperature of the separator $ij$. The separator exchanges heat with the neighboring zones $i$ and $j$ due to heat convection, whose resistance has been incorporated within $R_{s,ij}^{+}$ and $R_{s,ij}^{-}$. The heat transfer dynamics for a separator $ij$ can be written as
\begin{eqnarray}	\label{eq:SeparatorDynamics}
\dot{Q}_{s,ij}^{+} & = & \frac{1}{R_{s,ij}^{+}}(T_{s,ij} - T_{r,i}) \cr
\dot{Q}_{s,ij}^{-} & = & \frac{1}{R_{s,ij}^{-}}(T_{r,j} - T_{s,ij}) \cr
C_{s,ij}\dot{T}_{s,ij} & = & \dot{Q}_{s,ij}^{-} - \dot{Q}_{s,ij}^{+},
\end{eqnarray}
where $\dot{Q}_{s,ij}^-$ and $\dot{Q}_{s,ij}^+$ denote the heat input and output for separator $ij$, respectively. 

Regarding the RH system, it is modeled by two thermal restrictors $R_{w,i}(\dot{V}_{w,i})$ 
separated by a thermal capacitance $C_{w,i}$ (that determines the heat stored in the thermal medium). In particular, $R_{w,i}=R_{w,i}(\dot{V}_{w,i}) \df \nicefrac{1}{\rho_w c_w \dot{V}_{w,i}}$ and $C_{w,i} = c_w\rho_w V_{w}$, where $\rho_w$, $c_w$ and $V_{w}$ are the density, specific heat capacity and volume of the water in the RH system. Due to the capacitance of the thermal medium, a state variable $T_{w,i}$ is introduced that represents the temperature of the water at the point of heat exchange. The radiator exchanges heat with the thermal zone due to heat convection of resistance $R_{c,i}$. Furthermore, the inlet water temperature in the RH system is denoted by $T_{w,i}^+$, while the outlet water temperature, denoted $T_{w,i}^{-}$, is considered approximately equal to $T_{w,i}$ (if we assume uniform temperature distribution within the radiator device). Thus, the heat transfer dynamics of the RH system are
\begin{eqnarray}	\label{eq:RHDynamics}
\dot{Q}_{w,i} & = & \frac{1}{R_{c,i}}(T_{w,i}-T_{r,i}) \cr
C_{w,i}\dot{T}_{w,i} & = & \frac{1}{R_{w,i}(\dot{V}_{w,i})}(T_{w,i}^{+}-T_{w,i}) - \dot{Q}_{w,i},
\end{eqnarray}
where $\dot{Q}_{w,i}$ is the heat input to zone $i$ through the radiator.

Regarding the HVAC system, we consider the standard assumption that the fluid is incompressible, thus we are only interested in the thermal part of the system \cite{Karnopp78}. A thermal resistance, $R_{a,i}(\dot{V}_{a,i})$, is introduced to represent a restrictor of heat flow. It is given by $R_{a,i}=R_{a,i}(\dot{V}_{a,i}) = \nicefrac{1}{\rho_a c_a \dot{V}_{a,i}}$ where $\rho_a$ denotes the air density, $c_a$ denotes the air specific heat capacity at constant pressure, and $\dot{V}_{a,i}$ is the air volume rate. The outlet temperature of air in the HVAC system is considered approximately equal to the zone temperature, while the inlet air temperature is denoted by $T_{a,i}^{+}$. Thus, the heat input attributed to the HVAC system is
\begin{eqnarray}	\label{eq:HVACDynamics}
\dot{Q}_{HVAC,i} = \frac{1}{R_{a,i}(\dot{V}_{a,i})}(T_{a,i}^{+} - T_{r,i}).
\end{eqnarray}

The disturbance $\dot{Q}_{{\rm ext},i}$ denotes heat inputs from external sources. It will correspond to a vector of external heat flow rates, attributed to solar radiation, human presence, etc.

Finally, the heat-mass transfer dynamics of the thermal zone $i$ can be expressed as follows
\begin{eqnarray}	\label{eq:ZoneDynamics}
C_{r,i}\dot{T}_{r,i} = \sum_{j\in\mathcal{N}_i}\dot{Q}_{s,ij}^{+} + \dot{Q}_{w,i} + \dot{Q}_{HVAC,i} + \dot{Q}_{{\rm ext},i}.
\end{eqnarray}

From equations (\ref{eq:SeparatorDynamics}), (\ref{eq:RHDynamics}), (\ref{eq:HVACDynamics}) and (\ref{eq:ZoneDynamics}), we can derive in a straightforward manner the state-space representation of the overall system, which can be written in a generalized form as follows
\begin{eqnarray}	\label{eq:OverallSystemDynamics}
\dot{x}_i(t) = A_i(u_i(t))x_i(t) + E_i(u_i(t)) d_i(t), \quad i\in\mathcal{I},
\end{eqnarray}
where 
\begin{align*}
x_i \df & \left(\begin{array}{c}
T_{r,i} \\
{\rm col}\{T_{s,ij}\}_{j\in\mathcal{N}_i} \\
T_{w,i}
\end{array}\right), \quad u_i \df \left(\begin{array}{c}
\dot{V}_{w,i} \\ \dot{V}_{a,i}
\end{array}\right), \cr
d_i \df & \left(\begin{array}{c}
T_{w,i}^{+} \\ T_{a,i}^{+} \\ {\rm col}\{T_{r,j}\}_{j\in\mathcal{N}_i} \\ \dot{Q}_{{\rm ext},i}
\end{array}\right), \cr
\end{align*}
are the \textit{state vector}, \textit{control-input vector} and \textit{disturbance vector}, respectively. Furthermore, we define 
\begin{eqnarray*}
\begin{array}{l}
A_i(u_i) \df \cr \left[\begin{array}{ccc}
a_{r,i}(\dot{V}_{a,i}) & {\rm row}\{a_{rs,ij}^{+}\}_{j\in\mathcal{N}_i} & a_{rw,i} \cr 
{\rm col}\{a_{s,ij}^{+}\}_{j\in\mathcal{N}_i} & {\rm diag}\{a_{s,ij}\}_{j\in\mathcal{N}_i} & 0 \cr
a_{w,i} & 0 & a_{wc,i}(\dot{V}_{w,i})
\end{array}\right]
\end{array}
\end{eqnarray*}
\begin{eqnarray*}
\begin{array}{l}
E_i(u_i) \df \cr \left[\begin{array}{cccc}
0 & a_{ra,i}(\dot{V}_{a,i}) & 0 & a_{{\rm ext},i}\cr
0 & 0 & {\rm diag}\{a_{s,ij}^{-}\}_{j\in\mathcal{N}_i} & 0 \cr
a_{ww,i}(\dot{V}_{w,i}) & 0 & 0 & 0
\end{array}\right]
\end{array}
\end{eqnarray*}
where
\begin{eqnarray}	\label{eq:HeatTransferParameters}
\begin{array}{c}
a_{r,i}(\dot{V}_{a,i}) \df -\sum_{j\in\mathcal{N}_i}\nicefrac{1}{C_{r,i}R_{s,ij}^{+}} - \nicefrac{1}{C_{r,i}R_{c,i}} - \nicefrac{1}{C_{r,i}R_{a,i}(\dot{V}_{a,i})}, \cr
a_{rs,ij}^{+} \df \nicefrac{1}{C_{r,i}R_{s,ij}^{+}}, \quad a_{ra,i}(\dot{V}_{a,i}) \df \nicefrac{1}{C_{r,i}R_{a,i}(\dot{V}_{a,i})}, \cr 
a_{s,ij}^{+} \df \nicefrac{1}{C_{s,ij}R_{s,ij}^{+}}, \quad
a_{s,ij}^{-} \df \nicefrac{1}{C_{s,ij}R_{s,ij}^{-}},  \cr
a_{s,ij} \df - a_{s,ij}^{+} - a_{s,ij}^{-}, \quad 
a_{ww,i}(\dot{V}_{w,i}) \df \nicefrac{1}{C_{w,i}R_{w,i}(\dot{V}_{w,i})}, \cr
a_{w,i} \df \nicefrac{1}{C_{w,i}R_{c,i}}, \quad a_{{\rm ext},i} \df \nicefrac{1}{C_{r,i}}, \quad
a_{rw,i} \df \nicefrac{1}{C_{r,i}R_{c,i}}, \cr a_{wc,i}(\dot{V}_{w,i}) \df - a_{w,i} - a_{ww,i}(\dot{V}_{w,i}).
\end{array}
\end{eqnarray}

\subsection{Information structures}		\label{sec:InformationStructures}

\begin{table*}[tbh!]
\centering
\begin{tabular}{|c||c|c|c||c|c||c|c||c|c|}
\hline
& $T_{r,i}$ & $\{T_{s,ij}\}_{j\in\mathcal{N}_i}$ & $T_{w,i}$ & $\dot{V}_{w,i}$ & $\dot{V}_{a,i}$ & $T_{w,i}^{+}$ & $T_{a,i}^{+}$ & $\{T_{r,j}\}_{j\in\mathcal{N}_i}$ & $\dot{Q}_{{\rm ext},i}$ \\\hline\hline
(FI) & $\checkmark$ & $\times$ & $\checkmark$ & $\checkmark$ & $\checkmark$ & $\checkmark$ & $\checkmark$ & $\checkmark$ & $\checkmark$ \\\hline
(MI) & $\checkmark$ & $\times$ & $\times$ & $\checkmark$ & $\checkmark$ & $\checkmark$ & $\checkmark$ & $\checkmark$ & $\checkmark$ \\\hline
(LI) & $\checkmark$ & $\times$ & $\times$ & $\checkmark$ & $\checkmark$ & $\times$ & $\times$ & $\checkmark$ & $\checkmark$ \\\hline
\end{tabular}
\caption{Information structures with measured variables.}	
\label{Tb:ConsideredInformationStructures}
\end{table*}

Different \textit{information structures} will be considered depending on the sensors available. More specifically, \textit{\textbf{for the remainder of the paper}}, we consider the following cases:
\begin{itemize}
\item[$\bullet$] \textit{Full information (FI):} All state, control and disturbance variables can be measured except for the separator state variables $\{T_{s,ij}\}_{j}$.
\item[$\bullet$] \textit{Medium Information (MI):} All state, control and disturbance variables can be measured except for the separator state variables, $\{T_{s,ij}\}_{j}$, and the water state variable at the point of heat exchange, $T_{w,i}$.
\item[$\bullet$] \textit{Limited information (LI):} All state, control and disturbance variables can be measured except for the separator state variables $\{T_{s,ij}\}_{j}$, the water state variable $T_{w,i}$ and the disturbance variables $T_{w,i}^{+}$ and $T_{w,i}^{-}$. 
\end{itemize}

The specifics can also be visualized in Table~\ref{Tb:ConsideredInformationStructures}. The introduced alternative structures intend on demonstrating the flexibility of the proposed derivation of nonlinear regressors, since different residential buildings may not necessarily incorporate all sensors of the FI structure.

The case of limited information (LI) corresponds to the case where only the flow rates are known in the RH and/or the HVAC system, while the corresponding inlet/outlet temperatures of the thermal medium cannot be measured. Such assumption may be reasonable for most today's residential buildings.

Alternative information structures may be considered, such as the ones where $\dot{Q}_{{\rm ext},i}$ (due to radiation or peoples' presence) is not known, however such cases can be handled through the above cases and with small modifications.

\subsection{System separation under FI}	\label{sec:SystemSeparationUnderFI}

Under the full information structure (FI), a natural separation of the system dynamics can be introduced between the zone and RH dynamics. Such separation will be used at several points throughout the paper, and it is schematically presented in Figure~\ref{fig:SystemSeparationUnderFI}. 

\begin{figure}[h!]
\centering
\includegraphics{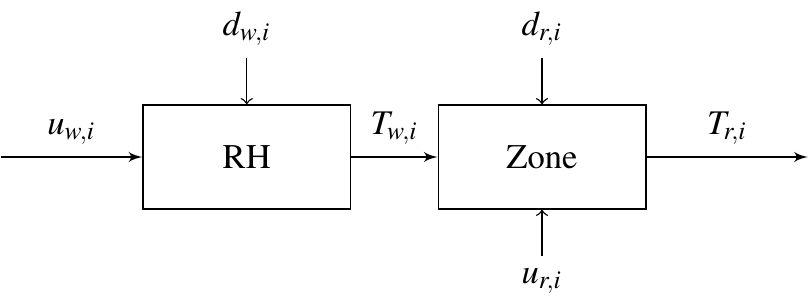}
\caption{System separation architecture under FI.}
\label{fig:SystemSeparationUnderFI}
\end{figure}

In particular, the overall (heat-mass transfer) dynamics for a single zone can be written as follows
\begin{eqnarray}	\label{eq:SystemSeparationUnderFI:RoomDynamics}
\dot{x}_{r,i}(t) = A_{r,i}(u_{r,i}(t)) x_{r,i}(t) + E_{r,i}(u_{r,i}(t)) d_{r,i}(t)
\end{eqnarray}
where $u_{r,i} \df \dot{V}_{a,i}$ is the control-input vector and
\begin{eqnarray*}
x_{r,i} \df \left(\begin{array}{c}
T_{r,i} \\
{\rm col}\{T_{s,ij}\}_{j\in\mathcal{N}_i} 
\end{array}\right), \quad d_{r,i} \df  \left(\begin{array}{c}
T_{w,i} \\ T_{a,i}^{+} \\ {\rm col}\{T_{r,j}\}_{j\in\mathcal{N}_i} \\ \dot{Q}_{{\rm ext},i}
\end{array}\right),
\end{eqnarray*}
are the state vector and disturbance vector of the zone dynamics, respectively. Furthermore, we define
\begin{eqnarray*}
A_{r,i}(u_i) \df \left[\begin{array}{cc}
a_{r,i}(\dot{V}_{a,i}) & {\rm row}\{a_{rs,ij}^{+}\}_{j\in\mathcal{N}_i} \cr 
{\rm col}\{a_{s,ij}^{+}\}_{j\in\mathcal{N}_i} & {\rm diag}\{a_{s,ij}\}_{j\in\mathcal{N}_i}
\end{array}\right], \cr
\end{eqnarray*}
\begin{eqnarray*}
E_{r,i} \df \left[\begin{array}{cccc}
a_{rw,i} & a_{ra,i}(\dot{V}_{a,i}) & 0 & a_{{\rm ext},i}  \cr
0 & 0 & {\rm diag}\{a_{s,ij}^{-}\}_{j\in\mathcal{N}_i} & 0
\end{array}\right].
\end{eqnarray*}

The RH dynamics can be written as follows
\begin{eqnarray}	\label{eq:SystemSeparationUnderFI:WaterDynamics}
\dot{x}_{w,i}(t) = A_{w,i}(u_{w,i}(t)) x_{w,i}(t) + E_{w,i}(u_{w,i}(t)) d_{w,i}(t),
\end{eqnarray}
where
\begin{eqnarray*}
x_{w,i} \df T_{w,i}, \quad u_{w,i} \df \dot{V}_{w,i},
\quad d_{w,i} \df \left(\begin{array}{c}
T_{w,i}^{+} \\ T_{r,i}
\end{array}\right)
\end{eqnarray*}
are the state vector, control-input vector and disturbance vector of the RH system, respectively. Furthermore, we define the following system matrices
\begin{eqnarray*}
A_{w,i}(u_i) \df \left[a_{wc,i}(\dot{V}_{w,i})\right], \quad 
E_{w,i} \df \left[\begin{array}{cc} a_{ww,i}(\dot{V}_{w,i}) & a_{w,i} \end{array}\right].
\end{eqnarray*}


\section{Linear Regression Models (LRM)}	\label{sec:LinearRegressionModels}

We would like to focus on models for system identification which (to great extent) retain the physical interpretation of the parameters. To this end, an output-error model structure (cf.,~\cite{Ljung99}) is considered, where a one-to-one correspondence between the identification parameters and the discrete-time model parameters can be established. In the remainder of this section, we provide some background on the Output-Error (OE) model structure, while we also discuss the implicit assumptions made when a linear structure of the dynamics is assumed.

\subsection{Background: Output-Error (OE) model}		\label{sec:OutputErrorModel}

If we assume that the relation between input $u$ and the undisturbed output of the system $w$ can be written as a linear difference equation, and that the disturbance consists of white measurement noise $v$, then we obtain:
\begin{eqnarray}	\label{eq:InputOutputDescription}
\lefteqn{w(k) + f_1w(k-1) + ... + f_{n_f}w(k-n_f)}\cr && = b_1u(k-1) + ... + b_{n_b}u(k-n_b) 
\end{eqnarray}
and the output is $y(k) = w(k) + v(k)$, for some positive integers $n_f\geq n_b$.
The parameter vector to be determined is: $\theta\df (b_1,\ldots,b_{n_b},f_1,\ldots,f_{n_f}).$
Since $w(k)$ is not observed, it should be constructed from previous inputs and it should carry an index $\theta$. 

The natural predictor (resulting from a \textit{maximum a posteriori} predictor\footnote{i.e., the one that maximizes the probability density function of the output given observations up to the previous time instant.}) is:
$\hat{y}(k|\theta) = w(k,\theta),$ 
and it is constructed from past inputs only. If we define the vector:
\begin{eqnarray}	\label{eq:RegressionVector}
\lefteqn{\varphi(k,\theta) \df \Big(u(k-1),\ldots, u(k-n_b), } \cr && -w(k-1,\theta), \ldots, -w(k-n_f,\theta) \Big),
\end{eqnarray}
then, the predictor can be written more compactly as:
$\hat{y}(k|\theta) = \varphi(k,\theta)\tr \theta,$ leading to a \emph{linear regression model} (LRM), where $\varphi(k,\theta)$ is called the \emph{regression vector}. To simplify notation, in several cases we will write $\varphi(k)$ instead of $\varphi(k,\theta)$.
Note that in the above expression, the $w(k-j,\theta)$, $j=1,2,...,n_f$, are not observed, but using the above maximum a-posteriori predictor, they can be computed using previous inputs as follows:
$w(k-j,\theta) = \hat{y}(k-j|\theta),$ $j=1,2,...,n_f.$

Such output-error model predictors will be used throughout the paper. However, note that, depending on the physics of the identified system, the regression vector $\varphi$ defined in (\ref{eq:RegressionVector}) may be nonlinear with respect to the input/output lagged variables. In such cases, the resulting predictor will be referred to as a \emph{linear regression model with nonlinear regressors} (NRM).

\subsection{Discussion}

Output-error model structures do not consider any disturbance terms in the process dynamics, instead they only consider measurement noise. Hence, such model structures are rather appropriate for providing a clear picture of the impact of the assumed (physical) model into the prediction error. On the other hand, when considering structures with process noise, such as ARMAX model structures (cf.,~\cite[Section~4.2]{Ljung99}), the process dynamics are perturbed by artificial terms, which are not easily motivated by the physical insight. Since the goal of this paper is to evaluate the impact of the assumed (physical) dynamics, we consider OE models more appropriate for evaluating predictions. This, however, does not imply that the performance of an OE model structure is necessarily better compared to an ARMAX model. The goal of this paper is \emph{not} to provide such a comparison.

Furthermore, \emph{identifiability} (cf.~\cite[Section~4.6]{Ljung99}) of the considered OE model structures will be guaranteed by the (inherent) controllability of the heat-mass transfer dynamics.

\subsection{Linear approximation \& implicit assumptions}		\label{sec:OutputErrorModel}

We wish to explore the utility of an OE model structure using a linear regression vector $\varphi$ into identifying the nonlinear system dynamics of Equation~(\ref{eq:OverallSystemDynamics}). Note that the original dynamics (\ref{eq:OverallSystemDynamics}) are bilinear in nature due to multiplications of the flow rates ($\dot{V}_{w,i}$, $\dot{V}_{a,i}$) with state or disturbance variables. 

An investigation of the original dynamics (\ref{eq:OverallSystemDynamics}) reveals that when an output-error model is considered that admits a linear regression vector $\varphi$, we implicitly admit assumptions that may lead to significant identification errors. To see this, let us consider the full information structure (FI). In a forthcoming section, we will show that the following approximation of the RH part of the dynamics holds for sufficiently small sampling period $\epsilon>0$:
\begin{eqnarray*}	\label{eq:LinearApproximation:NoHVAC}
\lefteqn{T_{w,i}(k+1) \approx } \cr && (1 + \epsilon a_{wc,i}(\dot{V}_{w,i}(k))) T_{w,i}(k) + 
\epsilon a_{ww,i}(\dot{V}_{w,i}(k)) T_{w,i}^{+}(k) + \cr && \epsilon a_{w,i} T_{r,i}(k),
\end{eqnarray*}
plus higher-order terms of $\epsilon$. 

According to this (finite-response) approximation, a linear regression vector $\varphi$ may describe well the evolution of $T_{w,i}$ as long as either the flow rate $\dot{V}_{w,i}$ or the temperatures $T_{w,i}$ and $T_{w,i}^{+}$ are \emph{not} varying significantly with time. However, variations in the flow rate $\dot{V}_{w,i}$ and in the water temperature $T_{w,i}$ may be large with time. Similar are the conclusions when investigating the effect of the air flow rate $\dot{V}_{a,i}$ in the evolution of the zone temperature. The exact effect of these nonlinear effects \emph{cannot} be a-priori determined.

\section{Linear Regression Models with Nonlinear Regressors (NRM)}		\label{sec:NonLinearRegressionModels}

In this section, we will explore the formulation of OE model structures when the regressors in $\varphi$ may be nonlinear functions of lagged input/output variables. Such investigation will be performed under the FI, MI and LI structure, extending prior work of the same authors \cite{ChasparisNatschlaeger14} to a larger set of possible information structures (beyond the FI structure).

For the derivation, we will be using the following notation: Let $\qq^{-1}\{\cdot\}$ denote the one-step delay operator, i.e., $\qq^{-1}\{x(k)\}=x(k-1)$. Note that the delay operator is \emph{linear}. 
Let us also define the following operators:
\begin{itemize}
\item $\PP_{s,ij} \df [(1+\epsilon a_{s,ij})\qq^{-1}]$
\item $\PP_{wc,i}(\dot{V}_{w,i}) \df [(1+\epsilon a_{wc,i}(\dot{V}_{w,i}))\qq^{-1}]$
\item $\PP_{r,i}(\dot{V}_{a,i}) \df [(1+\epsilon a_{r,i}(\dot{V}_{a,i}))\qq^{-1}]$
\end{itemize}
where $\epsilon>0$ defines the sampling period. Define also 
\begin{itemize}
\item $\QQ_{s,ij}\df [1-\PP_{s,ij}]$, 
\item $\QQ_{wc,i}(\dot{V}_{w,i})\df [1 - \PP_{wc,i}(\dot{V}_{w,i})]$, 
\item $\QQ_{r,i}(\dot{V}_{a,i}) \df [1-\PP_{r,i}(\dot{V}_{a,i})]$.
\end{itemize}
For any operator $\PP\{\cdot\}$, $\PP^{-1}\{\cdot\}$ will denote its inverse operator, i.e., $$\PP^{-1}\{\PP\{x(k)\}\}=x(k).$$

\begin{property}	\label{Pr:Property1}
For each zone $i\in\mathcal{I}$, the family of operators $\{\QQ_{s,ij}\}_{j\in\mathcal{N}_i}$ are pairwise commutative, i.e., $$\QQ_{s,ij}\QQ_{s,ij'}\{x(k)\} = \QQ_{s,ij'}\QQ_{s,ij}\{x(k)\}$$ for any $j,j'\in\mathcal{I}$ with $j\neq j'$.
\end{property}
\begin{proof}
For each zone $i\in\mathcal{I}$, and for any two neighboring zones $j,j'\in\mathcal{N}_i$, we have the following:
\begin{eqnarray*}
\lefteqn{\QQ_{s,ij}\QQ_{s,ij'}\{x(k)\}} \cr & = &
\left[1 - \PP_{s,ij}\right]\left\{ \left[1 - \PP_{s,ij'} \right] \left\{x(k)\right\} \right\} \cr & = & 
\left[1 - \PP_{s,ij}\right]\left\{ x(k) - (1+\epsilon a_{s,ij'}) x(k-1) \right\} \cr & = & 
x(k) - (1+\epsilon a_{s,ij'}) x(k-1) - \cr && (1+\epsilon a_{s,ij})x(k-1) + (1+\epsilon a_{s,ij})(1+\epsilon a_{s,ij'})x(k-2).
\end{eqnarray*}
It is straightforward to check that the same expression also occurs if we expand $\QQ_{s,ij'}\QQ_{s,ij}$, due to the fact that $(1+\epsilon a_{s,ij})$ commutes with $(1+\epsilon a_{s,ij'})$. Thus, the conclusion follows. $\bullet$
\end{proof}

Another two properties that will be handy in several cases are the following:
\begin{property}	\label{Pr:Property2}
For each zone $i\in\mathcal{I}$, we have
\begin{eqnarray*}
\lefteqn{[\QQ_{s,ij}\QQ_{wc,i}(\dot{V}_{w,i}(k))] = [\QQ_{wc,i}(\dot{V}_{w,i}(k))\QQ_{s,ij}] - }\cr &&
(1+\epsilon a_{s,ij})\epsilon \left[(1-\qq^{-1})a_{wc,i}(\dot{V}_{w,i}(k))\right]\qq^{-2}.
\end{eqnarray*}
\end{property}
\begin{proof}
See Appendix~\ref{Ap:Property2}. $\bullet$
\end{proof}
\begin{property}	\label{Pr:Property3}
For some finite index set $A$ and signal $x(k)$,
\begin{equation}
\Big[\prod_{j\in{A}}\QQ_{s,ij}\Big]\{x(k)\} = x(k) + \sum_{m=1}^{\magn{A}}\alpha_m x(k-m),
\end{equation}
for some constants $\alpha_1,...,\alpha_{\magn{A}}$.
\end{property}
\begin{proof}
The conclusion follows in a straightforward manner by induction and the definition of the delay operator $\QQ_{s,ij}$. $\bullet$
\end{proof}

\subsection{Full-Information (FI) structure}	\label{sec:NRM:FIstructure}

Using the natural decomposition of the dynamics under the FI structure described in Section~\ref{sec:SystemSeparationUnderFI}, in the following subsections we provide a derivation (using physical insight) of nonlinear regression vectors under an OE model structure for the two subsystems of the overall dynamics.

\subsubsection{RH dynamics}		\label{sec:RadiantHeatingSystem}

By isolating the RH dynamics, we first formulate a (finite-response) prediction of the water temperature $T_{w,i}$, as the following proposition describes.
\begin{proposition}	[NRM for RH under FI]	\label{Pr:NRM:RHDynamics}
For sufficiently small sampling period $\epsilon>0$, the RH dynamics (\ref{eq:SystemSeparationUnderFI:WaterDynamics}) can be approximated by:
\begin{eqnarray}	\label{eq:FI:WaterTempApproximation}
\lefteqn{T_{w,i}(k) \approx \PP_{wc,i}(\dot{V}_{w,i}(k-1))\{T_{w,i}(k)\} +} \cr && \Big.\epsilon a_{ww,i}(\dot{V}_{w,i}(k-1))T_{w,i}^{+}(k-1) + \epsilon a_{w,i}T_{r,i}(k-1),
\end{eqnarray}
plus higher-order terms of $\epsilon$. Furthermore, the maximum a-posteriori predictor of $T_{w,i}(k)$ can be approximated by $\hat{T}_{w,i}(k|\theta_{w,i}) \approx \varphi_{w,i}(k)\tr\theta_{w,i}$ plus higher-order terms of $\epsilon$, where $\theta_{w,i}$ is a vector of unknown parameters 
and
\begin{eqnarray}	\label{eq:RHpredictorFI}
\varphi_{w,i} (k) \df 
\left(\begin{array}{c}
\hat{T}_{w,i}(k-1|\theta_{w,i}) \\
\dot{V}_{w,i}(k-1) \hat{T}_{w,i}(k-1|\theta_{w,i}) \\
\dot{V}_{w,i}(k-1) T_{w,i}^{+}(k-1) \\
T_{r,i}(k-1)
\end{array}\right).
\end{eqnarray}
\end{proposition}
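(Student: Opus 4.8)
The plan is to discretize the scalar continuous-time RH dynamics (\ref{eq:SystemSeparationUnderFI:WaterDynamics}) by a first-order (forward-Euler) scheme and then recognize the resulting recursion as a linear regression in suitably chosen nonlinear regressors. Writing (\ref{eq:SystemSeparationUnderFI:WaterDynamics}) out explicitly with the definitions of $A_{w,i}$ and $E_{w,i}$, the water temperature obeys
$$\dot{T}_{w,i}(t) = a_{wc,i}(\dot{V}_{w,i}(t))\,T_{w,i}(t) + a_{ww,i}(\dot{V}_{w,i}(t))\,T_{w,i}^{+}(t) + a_{w,i}\,T_{r,i}(t).$$
First I would replace $\dot{T}_{w,i}(t)$ by the forward difference $(T_{w,i}(k)-T_{w,i}(k-1))/\epsilon$ and evaluate the right-hand side at the sample $k-1$; the Taylor remainder collects into the advertised higher-order terms in $\epsilon$ (this is where the ``sufficiently small $\epsilon$'' hypothesis enters). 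Grouping the coefficient of $T_{w,i}(k-1)$ into $(1+\epsilon a_{wc,i}(\dot{V}_{w,i}(k-1)))$ reproduces exactly the operator form (\ref{eq:FI:WaterTempApproximation}), since by definition $\PP_{wc,i}(\dot{V}_{w,i}(k-1))\{T_{w,i}(k)\}=(1+\epsilon a_{wc,i}(\dot{V}_{w,i}(k-1)))\,T_{w,i}(k-1)$.

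For the predictor, I would invoke the OE framework of Section~\ref{sec:OutputErrorModel}: the unobserved $T_{w,i}(k-1)$ is replaced by its one-step predictor $\hat{T}_{w,i}(k-1|\theta_{w,i})$, yielding a recursion for $\hat{T}_{w,i}(k|\theta_{w,i})$. The crucial structural observation is that, from (\ref{eq:HeatTransferParameters}), $a_{ww,i}(\dot{V}_{w,i})=\nicefrac{1}{C_{w,i}R_{w,i}(\dot{V}_{w,i})}=\nicefrac{\rho_w c_w\dot{V}_{w,i}}{C_{w,i}}$ is linear in the flow rate $\dot{V}_{w,i}$, so that $a_{wc,i}(\dot{V}_{w,i})=-a_{w,i}-a_{ww,i}(\dot{V}_{w,i})$ is affine in $\dot{V}_{w,i}$, while $a_{w,i}=\nicefrac{1}{C_{w,i}R_{c,i}}$ is constant. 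Substituting these explicit forms makes the products $\dot{V}_{w,i}(k-1)\hat{T}_{w,i}(k-1|\theta_{w,i})$ and $\dot{V}_{w,i}(k-1)T_{w,i}^{+}(k-1)$ appear with constant (flow-independent) coefficients, alongside the terms in $\hat{T}_{w,i}(k-1|\theta_{w,i})$ and $T_{r,i}(k-1)$.

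Collecting these four terms reads off the regression vector $\varphi_{w,i}(k)$ of (\ref{eq:RHpredictorFI}); the coefficients multiplying its entries form the unknown parameter vector $\theta_{w,i}$, each entry being a lumped constant built from $\epsilon$ and the physical parameters $\rho_w,c_w,C_{w,i},R_{c,i}$ (so that identifiability does not require knowing these individually). This gives $\hat{T}_{w,i}(k|\theta_{w,i})\approx\varphi_{w,i}(k)\tr\theta_{w,i}$ up to higher-order terms in $\epsilon$, as claimed. I expect the only delicate point to be the bookkeeping of the error: one must argue that the forward-Euler remainder, the evaluation of the time-varying coefficients at $k-1$ rather than in the continuum, and the substitution of $\hat{T}_{w,i}$ for $T_{w,i}$ all contribute terms that are genuinely $O(\epsilon^2)$ and hence absorbable into the stated remainder. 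The algebraic identification of $\varphi_{w,i}$ and $\theta_{w,i}$ is then routine, hinging entirely on the affine-in-$\dot{V}_{w,i}$ structure of the physical coefficients.
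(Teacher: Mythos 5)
Your proposal is correct and takes essentially the same route as the paper's own proof: a first-order Taylor (forward-Euler) discretization of the scalar RH dynamics yields exactly the recursion behind (\ref{eq:FI:WaterTempApproximation}), after which the OE substitution of $\hat{T}_{w,i}(k-1|\theta_{w,i})$ for the unobserved state gives the predictor. Your explicit observation that $a_{ww,i}(\dot{V}_{w,i})$ is linear and $a_{wc,i}(\dot{V}_{w,i})$ affine in the flow rate, so that the bilinear products enter with constant lumped coefficients, simply fills in the step the paper compresses into ``a regression vector may be derived of the desired form.''
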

\begin{proof}
By a Taylor-series expansion of the RH dynamics, the finite-step response of the water temperature $T_{w,i}(k)$ is
\begin{eqnarray*}
\lefteqn{T_{w,i}(k) \approx } \cr && (1 + \epsilon a_{wc,i}(\dot{V}_{w,i}(k-1))) T_{w,i}(k-1) + \cr && \epsilon a_{ww,i}(\dot{V}_{w,i}(k-1)) T_{w,i}^{+}(k-1) + \epsilon a_{w,i} T_{r,i}(k-1),
\end{eqnarray*}
plus higher-order terms of $\epsilon$. Equation~(\ref{eq:FI:WaterTempApproximation}) directly results from the definition of the delay operator $\PP_{wc,i}(\dot{V}_{w,i})$.
From Equation~(\ref{eq:FI:WaterTempApproximation}), a regression vector may be derived of the desired form. $\bullet$
\end{proof}

\subsubsection{Zone dynamics}

A similar approach to Proposition~\ref{Pr:NRM:RHDynamics} for the derivation of a nonlinear regression vector can also be employed for the zone dynamics. 

\begin{proposition}[NRM for Zone under FI]		\label{Pr:NRM:ZoneDynamicsFI}
For sufficiently small sampling period $\epsilon>0$, the maximum a-posteriori predictor of $T_{r,i}(k)$ can be approximated by $\hat{T}_{r,i}(k|\theta_{r,i})\approx \varphi_{r,i}(k)\tr\theta_{r,i}$ plus higher-order terms of $\epsilon$, where $\theta_{r,i}$ is a vector of unknown parameters 
and
\begin{eqnarray*}
\varphi_{r,i} (k) \df 
\left(\begin{array}{c}
{\rm col}\{\hat{T}_{r,i}(k-m|\theta_{r,i})\}_{m=1}^{\magn{\mathcal{N}_i}+1} \\
{\rm col}\Big\{{\rm col}\{T_{r,j}(k-m)\}_{m=2}^{\magn{\mathcal{N}_i}+1}\Big\}_{j\in\mathcal{N}_{i}} \\
{\rm col}\{\dot{V}_{a,i}(k-m) \hat{T}_{r,i}(k-m|\theta_{r,i})\}_{m=1}^{\magn{\mathcal{N}_i}+1} \\
{\rm col}\{\dot{V}_{a,i}(k-m) T_{a,i}^{+}(k-m)\}_{m=1}^{\magn{\mathcal{N}_i}+1} \\
{\rm col}\{\hat{T}_{w,i}(k-m|\theta_{w,i})\}_{m=1}^{\magn{\mathcal{N}_i}+1} \\
{\rm col}\{\dot{Q}_{{\rm ext},i}(k-m)\}_{m=1}^{\magn{\mathcal{N}_i}+1}
\end{array}\right).
\end{eqnarray*}
\end{proposition}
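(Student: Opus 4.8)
The plan is to mirror the derivation of Proposition~\ref{Pr:NRM:RHDynamics}, but now applied to the coupled zone/separator subsystem~(\ref{eq:SystemSeparationUnderFI:RoomDynamics}); the extra difficulty is that the separator states $\{T_{s,ij}\}_{j\in\mathcal{N}_i}$ are \emph{not} measured under FI and must therefore be eliminated. First I would discretize the two scalar blocks of~(\ref{eq:SystemSeparationUnderFI:RoomDynamics}) by a first-order Taylor (forward-Euler) expansion in $\epsilon$, exactly as in the proof of Proposition~\ref{Pr:NRM:RHDynamics}, and rewrite them in operator form (up to higher-order terms of $\epsilon$) as
\begin{eqnarray*}
\QQ_{r,i}(\dot{V}_{a,i})\{T_{r,i}(k)\} &=& \epsilon\sum_{j\in\mathcal{N}_i} a_{rs,ij}^{+}T_{s,ij}(k-1) + \epsilon a_{rw,i}T_{w,i}(k-1) \cr && +\, \epsilon a_{ra,i}(\dot{V}_{a,i}(k-1))T_{a,i}^{+}(k-1) + \epsilon a_{{\rm ext},i}\dot{Q}_{{\rm ext},i}(k-1),
\end{eqnarray*}
\begin{eqnarray*}
\QQ_{s,ij}\{T_{s,ij}(k)\} &=& \epsilon a_{s,ij}^{+}T_{r,i}(k-1) + \epsilon a_{s,ij}^{-}T_{r,j}(k-1),
\end{eqnarray*}
using the definitions of $\QQ_{r,i}$ and $\QQ_{s,ij}$.

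The key step is to eliminate the unmeasured $T_{s,ij}$ by applying the product operator $\prod_{j\in\mathcal{N}_i}\QQ_{s,ij}$ to the zone equation. By Property~\ref{Pr:Property1} this product is order-independent, so for each separator term the matching factor $\QQ_{s,ij}$ can be brought next to $T_{s,ij}(k-1)$ and replaced, via the separator equation shifted by one step, with $\epsilon a_{s,ij}^{+}T_{r,i}(k-2) + \epsilon a_{s,ij}^{-}T_{r,j}(k-2)$; the remaining $\magn{\mathcal{N}_i}-1$ factors then, by Property~\ref{Pr:Property3}, produce linear combinations of further lags of the measured temperatures $T_{r,i}$ and $T_{r,j}$. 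The terms $T_{w,i}$, $T_{a,i}^{+}$ and $\dot{Q}_{{\rm ext},i}$ on the right are likewise acted on by the full product, which again by Property~\ref{Pr:Property3} yields linear combinations of their lags up to order $\magn{\mathcal{N}_i}+1$.

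On the left-hand side, $\bigl[\prod_{j\in\mathcal{N}_i}\QQ_{s,ij}\bigr]\QQ_{r,i}(\dot{V}_{a,i})$ is a composition of $\magn{\mathcal{N}_i}+1$ first-order operators, so expanding it and solving for $T_{r,i}(k)$ expresses the latter as a linear combination of lagged variables. Collecting terms, I would read off the six blocks of $\varphi_{r,i}$: the lags of $T_{r,i}$ (from both the left-hand expansion and the separator substitution) span $m=1,\dots,\magn{\mathcal{N}_i}+1$; the cross-terms $T_{r,j}$ enter only at lags $m=2,\dots,\magn{\mathcal{N}_i}+1$ because the separator equation already carries one delay; the bilinear regressors $\dot{V}_{a,i}(k-m)T_{r,i}(k-m)$ and $\dot{V}_{a,i}(k-m)T_{a,i}^{+}(k-m)$ arise from the affine dependence of $a_{r,i}$ and $a_{ra,i}$ on $\dot{V}_{a,i}$ (recall $\nicefrac{1}{R_{a,i}}$ is linear in $\dot{V}_{a,i}$), shifted through the separator delays; and the lags of $T_{w,i}$ and $\dot{Q}_{{\rm ext},i}$ fill the last two blocks. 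All physical constants collapse into the unknown parameter vector $\theta_{r,i}$, and replacing the unobserved outputs $T_{r,i}(k-m)$ and $T_{w,i}(k-m)$ by the OE predictions $\hat{T}_{r,i}(k-m|\theta_{r,i})$ and $\hat{T}_{w,i}(k-m|\theta_{w,i})$ (the latter from Proposition~\ref{Pr:NRM:RHDynamics}) yields $\hat{T}_{r,i}(k|\theta_{r,i})\approx\varphi_{r,i}(k)\tr\theta_{r,i}$.

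I expect the main obstacle to be the bookkeeping that certifies the lag counts are \emph{exactly} $\magn{\mathcal{N}_i}+1$ and that no separator state survives the elimination. A secondary subtlety is the mild non-commutativity of the time-varying operator $\QQ_{r,i}(\dot{V}_{a,i})$ with the constant-coefficient $\QQ_{s,ij}$: by fixing the order $\prod_{j}\QQ_{s,ij}\cdot\QQ_{r,i}$ (keeping $\QQ_{r,i}$ innermost) one can largely sidestep any reordering, and whenever a rearrangement is convenient the resulting commutator is of the type quantified in Property~\ref{Pr:Property2} and carries an extra factor of $\epsilon$, so it is absorbed into the ``higher-order terms of $\epsilon$''. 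Keeping careful track of which cross-products of the $\epsilon a_{\cdot}$ coefficients are genuinely first order and which are $O(\epsilon^2)$ is the delicate part of the argument.
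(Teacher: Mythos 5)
Your proposal is correct and follows essentially the same route as the paper's own proof in Appendix~\ref{Ap:NRM:ZoneDynamicsFI}: a first-order Taylor (forward-Euler) discretization of the zone and separator dynamics, elimination of the unmeasured $T_{s,ij}$ by applying the commuting product $\prod_{j\in\mathcal{N}_i}\QQ_{s,ij}$ (Properties~\ref{Pr:Property1} and~\ref{Pr:Property3}), and collection of the lagged and bilinear terms (the latter from the affine dependence of $a_{r,i}$ and $a_{ra,i}$ on $\dot{V}_{a,i}$) into the six blocks of $\varphi_{r,i}$. Your remark that keeping $\QQ_{r,i}(\dot{V}_{a,i})$ innermost sidesteps any time-varying/constant-coefficient reordering matches the paper's ordering exactly, so Property~\ref{Pr:Property2} is indeed never needed here (it enters only in the MI proof).
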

\begin{proof}
See Appendix~\ref{Ap:NRM:ZoneDynamicsFI}. $\bullet$
\end{proof}

\subsection{Medium-Information (MI) structure}

The case of MI structure is slightly different from the FI structure, since the water temperature, $T_{w,i}$, at the point of heat exchange cannot be directly measured, therefore the dynamics of the RH and the zone cannot be separated as was demonstrated in Figure~\ref{fig:SystemSeparationUnderFI}. The following proposition provides the corresponding NRM predictor for the MI structure.

\begin{proposition}[NRM for Zone under MI]		\label{Pr:NRM:ZoneDynamicsMI}
For sufficiently small sampling period $\epsilon>0$, the maximum a-posteriori predictor of $T_{r,i}(k)$ can be approximated by $\hat{T}_{r,i}(k|\theta_{r,i})\approx \varphi_{r,i}(k)\tr\theta_{r,i}$ plus higher-order terms of $\epsilon$, where $\theta_{r,i}$ is a vector of unknown parameters 
and
\begin{eqnarray}	\label{eq:NonlinearRegressorMI}
\lefteqn{\varphi_{r,i} (k) \df  } \cr &&
\left(\begin{array}{c}
{\rm col}\{\hat{T}_{r,i}(k-m|\theta_{r,i})\}_{m=1}^{\magn{\mathcal{N}_i}+2} \\
{\rm col}\Big\{{\rm col}\{T_{r,j}(k-m)\}_{m=2}^{\magn{\mathcal{N}_i}+2}\Big\}_{j\in\mathcal{N}_{i}} \\
{\rm col}\{\dot{V}_{a,i}(k-m) \hat{T}_{r,i}(k-m|\theta_{r,i})\}_{m=1}^{\magn{\mathcal{N}_i}+2} \\
{\rm col}\{\dot{V}_{a,i}(k-m) T_{a,i}^{+}(k-m)\}_{m=1}^{\magn{\mathcal{N}_i}+2} \\
{\rm col}\{\dot{V}_{a,i}(k-m) \dot{V}_{w,i}(k-m) T_{a,i}^{+}(k-m)\}_{m=2}^{\magn{\mathcal{N}_i}+2} \\
{\rm col}\{\dot{V}_{w,i}(k-m-1)\hat{T}_{r,i}(k-m|\theta_{r,i})\}_{m=1}^{\magn{\mathcal{N}_i}+1}\\
{\rm col}\{\dot{V}_{w,i}(k-m)\hat{T}_{r,i}(k-m|\theta_{r,i})\}_{m=2}^{\magn{\mathcal{N}_i}+2}\\
{\rm col}\{\dot{V}_{w,i}(k-m)\dot{V}_{a,i}(k-m)\hat{T}_{r,i}(k-m|\theta_{r,i})\}_{m=2}^{\magn{\mathcal{N}_i}+2}\\
{\rm col}\{\dot{V}_{w,i}(k-m)T_{w,i}^{+}(k-m)\}_{m=2}^{\magn{\mathcal{N}_i}+2}\\
{\rm col}\{\dot{Q}_{{\rm ext},i}(k-m)\}_{m=1}^{\magn{\mathcal{N}_i}+2} \\
{\rm col}\{\dot{V}_{w,i}(k-m)\dot{Q}_{{\rm ext},i}(k-m)\}_{m=2}^{\magn{\mathcal{N}_i}+2} \\
\end{array}\right).
\end{eqnarray}
\end{proposition}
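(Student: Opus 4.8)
The plan is to eliminate both unmeasured internal states—the separator temperatures $\{T_{s,ij}\}_{j\in\mathcal{N}_i}$ and the water temperature $T_{w,i}$—from the discretized zone dynamics by composing the delay operators $\QQ_{s,ij}$, $\QQ_{wc,i}$ and $\QQ_{r,i}$, and then to read off the surviving measured signals as the entries of $\varphi_{r,i}$. First I would write the Euler/Taylor approximation of the zone equation (first row of (\ref{eq:SystemSeparationUnderFI:RoomDynamics})) in operator form,
\begin{eqnarray*}
\QQ_{r,i}(\dot{V}_{a,i}(k-1))\{T_{r,i}(k)\} &=& \epsilon\sum_{j\in\mathcal{N}_i} a_{rs,ij}^{+}\,T_{s,ij}(k-1) + \epsilon a_{rw,i}\,T_{w,i}(k-1) \\
&& + \epsilon a_{ra,i}(\dot{V}_{a,i}(k-1))\,T_{a,i}^{+}(k-1) + \epsilon a_{{\rm ext},i}\,\dot{Q}_{{\rm ext},i}(k-1),
\end{eqnarray*}
alongside the separator identity $\QQ_{s,ij}\{T_{s,ij}(k)\}=\epsilon a_{s,ij}^{+}T_{r,i}(k-1)+\epsilon a_{s,ij}^{-}T_{r,j}(k-1)$ and the water identity of Proposition~\ref{Pr:NRM:RHDynamics}, $\QQ_{wc,i}(\dot{V}_{w,i}(k-1))\{T_{w,i}(k)\}=\epsilon a_{ww,i}(\dot{V}_{w,i}(k-1))T_{w,i}^{+}(k-1)+\epsilon a_{w,i}T_{r,i}(k-1)$, all modulo higher-order terms in $\epsilon$.

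Next I would apply the composite operator $\QQ_{wc,i}(\dot{V}_{w,i})\prod_{j\in\mathcal{N}_i}\QQ_{s,ij}$ to both sides. On the right-hand side, the factor $\QQ_{s,ij}$ acting on the matching $T_{s,ij}$ term collapses it into the measured signals $T_{r,i},T_{r,j}$, while the remaining factors $\prod_{j'\neq j}\QQ_{s,ij'}$—which commute among themselves by Property~\ref{Pr:Property1}—only generate delayed linear combinations of those signals through Property~\ref{Pr:Property3}. Simultaneously $\QQ_{wc,i}$ acting on the $T_{w,i}$ term eliminates the water state via the water identity, producing the regressors $\dot{V}_{w,i}T_{w,i}^{+}$ (the $\dot{V}_{w,i}$ factor appearing because $a_{ww,i}(\dot{V}_{w,i})\propto\dot{V}_{w,i}$) together with further $T_{r,i}$ contributions. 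Because $\QQ_{wc,i}$ carries the factor $a_{wc,i}(\dot{V}_{w,i})$ (affine in $\dot{V}_{w,i}$) and $\QQ_{r,i}$, $a_{ra,i}$ carry factors affine in $\dot{V}_{a,i}$, propagating the already-measured HVAC and disturbance terms through the composition yields exactly the bilinear and trilinear products $\dot{V}_{a,i}T_{a,i}^{+}$, $\dot{V}_{w,i}\dot{V}_{a,i}T_{a,i}^{+}$, $\dot{V}_{w,i}\dot{Q}_{{\rm ext},i}$ and the mixed state products $\dot{V}_{w,i}\hat{T}_{r,i}$, $\dot{V}_{w,i}\dot{V}_{a,i}\hat{T}_{r,i}$ listed in (\ref{eq:NonlinearRegressorMI}).

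On the left-hand side the same composition gives $\QQ_{wc,i}(\dot{V}_{w,i})\big[\prod_{j}\QQ_{s,ij}\big]\QQ_{r,i}(\dot{V}_{a,i})\{T_{r,i}(k)\}$, which by Property~\ref{Pr:Property3} expands into the autoregressive block $\mathrm{col}\{\hat{T}_{r,i}(k-m)\}_{m=1}^{\magn{\mathcal{N}_i}+2}$ and its flow-weighted companions; the top lag $\magn{\mathcal{N}_i}+2$ is precisely the number of $\QQ$-factors ($\magn{\mathcal{N}_i}$ separator factors, plus $\QQ_{wc,i}$, plus $\QQ_{r,i}$). Here I must invoke Property~\ref{Pr:Property2} to interchange $\QQ_{wc,i}(\dot{V}_{w,i})$ with the separator operators, since $\QQ_{wc,i}$ depends on the time-varying $\dot{V}_{w,i}$ and hence does \emph{not} commute with $\qq^{-1}$; the correction term supplied by Property~\ref{Pr:Property2} is exactly what yields the two distinct water-weighted autoregressive families in (\ref{eq:NonlinearRegressorMI}), one with argument $\dot{V}_{w,i}(k-m-1)$ and one with $\dot{V}_{w,i}(k-m)$. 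Finally, replacing each unobserved $T_{r,i}(k-m)$ by its predictor value $\hat{T}_{r,i}(k-m\,|\,\theta_{r,i})$—the maximum a-posteriori substitution justified as in Proposition~\ref{Pr:NRM:ZoneDynamicsFI}—and grouping equal regressors produces the stated vector $\varphi_{r,i}$.

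The main obstacle is the operator bookkeeping triggered by the non-commutativity of $\QQ_{wc,i}(\dot{V}_{w,i})$ with the separator and air operators. Unlike the FI case (Proposition~\ref{Pr:NRM:ZoneDynamicsFI}), where $T_{w,i}$ is measured and need not be eliminated, the extra $\QQ_{wc,i}$ factor both raises the maximal lag from $\magn{\mathcal{N}_i}+1$ to $\magn{\mathcal{N}_i}+2$ and, through its $\dot{V}_{w,i}$-dependence, spawns the whole collection of flow-rate cross products. Keeping a disciplined ledger of which element of $\{1,\dot{V}_{w,i},\dot{V}_{a,i},\dot{V}_{w,i}\dot{V}_{a,i}\}$ multiplies which delayed signal, with the correct lag range for each, is the delicate part; Property~\ref{Pr:Property2} is the essential tool that makes this tracking exact rather than merely approximate, and checking that every omitted cross term is genuinely $O(\epsilon^{2})$ is what secures the ``plus higher-order terms'' qualification.
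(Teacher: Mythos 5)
Your proposal follows the paper's own proof (Appendix~\ref{Ap:NRM:ZoneDynamicsMI}) in every essential step: the Euler/Taylor discretization written through the $\QQ$-operators, elimination of $T_{s,ij}$ via $\QQ_{s,ij}^{-1}$ and of $T_{w,i}$ via $\QQ_{wc,i}^{-1}$ (with the flow argument shifted to $k-2$ by the one-step delay), application of a composite first-order-factor operator of total order $\magn{\mathcal{N}_i}+2$, commutativity of the separator family (Property~\ref{Pr:Property1}), lag bookkeeping via Property~\ref{Pr:Property3}, and the maximum a-posteriori substitution $T_{r,i}\mapsto\hat{T}_{r,i}(\cdot|\theta_{r,i})$. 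Your ledger of which element of $\{1,\dot{V}_{w,i},\dot{V}_{a,i},\dot{V}_{w,i}\dot{V}_{a,i}\}$ multiplies which delayed signal, and the lag count $\magn{\mathcal{N}_i}+2$, also match the paper.

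One substantive point is misattributed, however: your claim that the correction term of Property~\ref{Pr:Property2} is ``exactly what yields'' the two water-weighted autoregressive families. In the paper the composite operator is applied in the order $\prod_{j}\QQ_{s,ij}\,\QQ_{wc,i}(\dot{V}_{w,i}(k-2))$, and Property~\ref{Pr:Property2} is invoked only inside the separator terms, which already carry a factor $\epsilon^{2}$; its $O(\epsilon)$ correction is therefore \emph{discarded} as higher order and contributes no regressors whatsoever. The two families arise instead from the plain expansion of $\QQ_{wc,i}(\dot{V}_{w,i}(k-2))\QQ_{r,i}(\dot{V}_{a,i}(k-1))\{T_{r,i}(k)\}$ in (\ref{eq:ApproximationMI}): the coefficient of $\QQ_{wc,i}$ is pinned at lag $k-2$ (forced by cancelling the delayed water identity), while the composed operator touches $T_{r,i}$ at both $k-1$ and $k-2$, producing the mismatched products $\dot{V}_{w,i}(k-m-1)T_{r,i}(k-m)$, $m=1,\dots,\magn{\mathcal{N}_i}+1$, and the matched ones $\dot{V}_{w,i}(k-m)T_{r,i}(k-m)$, $m=2,\dots,\magn{\mathcal{N}_i}+2$, once Property~\ref{Pr:Property3} spreads the lags. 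Your reversed ordering $\QQ_{wc,i}\prod_{j}\QQ_{s,ij}$ can be made to work, but then the interchange of Property~\ref{Pr:Property2} is needed to cancel the water inverse $\QQ_{wc,i}^{-1}$ rather than on the autoregressive block, and since $\QQ_{wc,i}^{-1}$ applied to the $O(\epsilon^{2})$ water forcing is only $O(\epsilon)$, the commutator corrections must there be checked at order $\epsilon^{2}$; they turn out merely to reproduce regressor shapes already present, so---the entries of $\theta_{r,i}$ being free parameters---the final vector (\ref{eq:NonlinearRegressorMI}) is unchanged. In short: same approach, correct endpoint, but the mechanism you name for the two flow-weighted blocks is not the one that actually produces them, and taken literally it would lead you to retain a correction that the paper (legitimately) drops as $O(\epsilon^{3})$.
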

\begin{proof}
The proof follows similar reasoning with the proof of Proposition~\ref{Pr:NRM:ZoneDynamicsFI} and it is presented in Appendix~\ref{Ap:NRM:ZoneDynamicsMI}. $\bullet$
\end{proof}

\subsection{Limited-Information (LI) structure}

Similarly to the derivation of the NRM for the FI and MI structures, we may also derive the corresponding NRM for the case of the LI structure. Note that the difference of such information structure (compared to MI) is the fact that the inlet temperatures of the thermal mediums ($T_{w,i}^+$ and $T_{a,i}^+$) cannot be measured. In such case, the only valid assumption is the fact that the inlet temperature of the thermal mediums remain constant with time. In this case, Proposition~\ref{Pr:NRM:ZoneDynamicsMI} continue to hold, when $T_{w,i}^{+}$ and $T_{a,i}^{+}$ are replaced with unity in (\ref{eq:NonlinearRegressorMI}) (i.e., they become part of the parameters in $\theta_{r,i}$).

\section{Performance Evaluation}	\label{sec:PerformanceEvaluation}

In this section, we provide a comparison between standard linear regression models and the nonlinear ones derived in Section~\ref{sec:NonLinearRegressionModels}. The comparison will be performed both with respect to a) \emph{identification error}, and b) \emph{prediction performance} within a standard MPC implementation for climate control in a residential building. The second part is considered the most representative of the performance of an identification model since even small prediction errors might lead to significant performance degradation.

\subsection{Simulation platform}

In order to compare the proposed nonlinear regression vectors with standard linear regressors, we used the EnergyPlus (V7-2-0) building simulator developed by the U.S. Department of Energy \cite{EnergyPlus}. The BCVTB simulation tool has also been used for allowing data collection and also climate control developed in MATLAB to be implemented during run-time. A typical (three-storey) residential building in Linz, Austria, was accurately modeled and simulated with the EnergyPlus environment to allow for collecting data from a realistic residential environment.

\subsection{Data generation}		\label{sec:DataGeneration}

The data collection for system identification was performed under normal operating conditions of the heating system during the winter months (October - April) under the weather conditions of Linz, Austria. 
%
%
To replicate normal operating conditions, a standard hysteresis controller was employed, according to which the water flow $\dot{V}_{w,i}(k)$ is updated regularly at time instances $t_k=kT_{\rm sam}$ where $T_{\rm sam}=\nicefrac{1}{12}h$. The control law is as follows: 
\begin{eqnarray*}
\lefteqn{\dot{V}_{w,i}(k) \df} \cr && \begin{cases}
\dot{V}_{w,{\rm max}} & \mbox{if } p_i(k) > 0 \mbox{ and } \\ 
  &  (T_{r,i}(k) < T_{\rm set} - \Delta{T} \mbox{ or }\\
  & ( T_{r,i}(k) \geq T_{\rm set} - \Delta{T} \mbox{ and } T_{r,i}(k-1) \leq T_{r,i}(k) ) ),\\
0 & \mbox{else }
\end{cases}
\end{eqnarray*}
where $\Delta{T}$ determines a small temperature range about the desired (set) temperature, $T_{\rm set}$, in which the control flow maintains its value and $\dot{V}_{w,{\rm max}}$ denotes the maximum water flow set to $0.0787kg/sec$.  
Furthermore, $p_i(k)\in\{0,1\}$ indicates whether people are present in thermal zone $i$. In other words, the hysteresis controller applies only if someone is present in the thermal zone $i$, which can be easily determined through a motion detector. Furthermore, the inlet water temperature $T_{w,i}^{+}$ is determined by the following heating curve:
\begin{equation}	\label{eq:HeatingCurve}
T_{w,i}^{+} \df \begin{cases}
\rho_0 + \rho_1\cdot (T_{\rm set} - T_{\rm out})^\zeta, & \mbox{if } T_{\rm set} > T_{\rm out} \\
\rho_0, & \mbox{else.}
\end{cases}
\end{equation}
where $\rho_0$, $\rho_1$ and $\zeta$ are positive constants. For the simulation, we set $\Delta{T}=0.1$, $\rho_0=29.30$, $\rho_1=0.80$ and $\zeta=0.97$. The set temperature was set equal to $T_{\rm set}=21^oC$. 

There exists a natural ventilation system that operates autonomously with an intermittent flow pattern $\dot{V}_{a,i}$. Thus, the only control parameter is $\dot{V}_{w,i}$. All parameters mentioned in Table~\ref{Tb:ConsideredInformationStructures} can be measured, except for $\{T_{s,ij}\}_{j}$, which allows for evaluating all considered structures (FI, MI and LI).

%

\subsection{Recursive system identification}	\label{sec:RecursiveSystemIdentification}

To utilize the linear and nonlinear regression vectors for system identification of heat-mass transfer dynamics of a residential building, we use an OE model structure as described in Section~\ref{sec:NonLinearRegressionModels}, while we employ a regularized recursive least-squares implementation (cf.,~\cite[Section~12.3]{Sayed03}) for training its parameters. The reason for implementing a recursive identification procedure is primarily due to the fact that the size of the available data is quite large, which makes the use of standard least-squares approaches practically infeasible. Besides, predictions for the zone temperature will be  needed continuously during run-time, demanding for more efficient computational schemes. Furthermore, a recursive least squares implementation allows for an adaptive response to more recent data, thus capturing more recent effects.

\subsection{Identification experiments}					\label{sec:PersistenceExcitation}

The experiments performed in this study will involve a) a standard linear regression model, with b) the derived NRM of Proposition~\ref{Pr:NRM:ZoneDynamicsMI} (which corresponds to the MI structure). Our intention is to evaluate the benefit of the more accurate physical representation of the models derived in Section~\ref{sec:NonLinearRegressionModels}.
In both regression vectors (linear or nonlinear) we consider $|\mathcal{N}_i|=1$, i.e., we implicitly assume that there is a single neighboring zone. This can be justified by the fact that the building can be considered as a single thermal zone, since the same heating schedule is employed in all rooms.

In particular, the linear regression model (LRM) implemented corresponds to the following structure:
\begin{eqnarray}	\label{eq:LinearRegressorMI}
\varphi_{r,i} (k) \df  
\left(\begin{array}{c}
{\rm col}\{\hat{T}_{r,i}(k-m|\theta_{r,i})\}_{m=1}^{\magn{\mathcal{N}_i}+2} \\
{\rm col}\Big\{{\rm col}\{T_{r,j}(k-m)\}_{m=1}^{\magn{\mathcal{N}_i}+2}\Big\}_{j\in\mathcal{N}_{i}} \\
{\rm col}\{\dot{V}_{a,i}(k-m)\}_{m=1}^{\magn{\mathcal{N}_i}+2} \\
{\rm col}\{T_{a,i}^{+}(k-m)\}_{m=1}^{\magn{\mathcal{N}_i}+2} \\
{\rm col}\{\dot{V}_{w,i}(k-m)\}_{m=1}^{\magn{\mathcal{N}_i}+2} \\
{\rm col}\{T_{w,i}^{+}(k-m|\theta_{r,i})\}_{m=1}^{\magn{\mathcal{N}_i}+2}\\
{\rm col}\{\dot{Q}_{{\rm ext},i}(k-m)\}_{m=1}^{\magn{\mathcal{N}_i}+2} \\
\end{array}\right),
\end{eqnarray}
i.e., it only considers delayed samples of the state and the available input and disturbance parameters. The NRM model implemented corresponds to the one of Equation~(\ref{eq:NonlinearRegressorMI}).

\subsection{Persistence of Excitation}

Following the analysis of Section~\ref{sec:NonLinearRegressionModels}, for any thermal zone $i\in\mathcal{I}$, the output of the zone $y_i(k)\df T_{r,i}(k)$ can briefly be expressed in the following form:
\begin{equation}	\label{eq:PersistenceExcitation:BasicModel}
y_i(k) = G_{i0}(\qq)y_i(k) + \sum_{l\in{L}}G_{il}(\qq)z_{il}(k),
\end{equation}
for some linear transfer functions $G_{i0}(\qq)$, $G_{il}(\qq)$, $l\in{L}$ and some index set $L\subset\mathbb{N}$. The terms $z_{il}(k)$ represent time sequences of input, disturbance or output signals, or products of such terms. For example, if we consider the MI model presented in (\ref{eq:NonlinearRegressorMI}), the terms $z_{il}$ may correspond to a disturbance, such as $T_{r,j}$, $j\in\mathcal{N}_i$, or to a product of a control input and the output, such as $\dot{V}_{w,i}(k)T_{r,i}(k)$. 

A model of the generic form of equation (\ref{eq:PersistenceExcitation:BasicModel}) is uniquely determined by its corresponding transfer functions $G_{i0}(\qq)$, $G_{il}(\qq)$, $l\in{L}$. Thus, in order for the data to be informative, the data should be such that, for \emph{any} two different models of the form (\ref{eq:PersistenceExcitation:BasicModel}) the following condition is satisfied:\footnote{cf., discussion in \cite[Section~13.4]{Ljung99}.}
\begin{equation}	\label{eq:InformativeExperimentsCondition}
\overline{E}\left[\Delta{G}_{i0}(\qq) y_i(k) + \sum_{l\in{L}}\Delta{G}_{il}(\qq)z_{l}(k)\right]^{2} \neq 0,
\end{equation}
where $\Delta G_{i0}(\qq)$, $\Delta G_{il}(\qq)$ denote the corresponding differences between the transfer functions of the two different models. Moreover, $\overline{E}\df\lim_{N\to\infty}\nicefrac{1}{N}\sum_{k=1}^{N}E[\cdot]$ and $E[\cdot]$ denotes the expectation operator. Since the models are assumed different, either $\Delta G_{i0}(\qq)$ or $\Delta G_{il}(\qq)$, for some $l\in{L}$, should be nonzero. Thus, if $\Delta G_{il}(\qq)\neq{0}$, for some $l\in{L}$, the \emph{persistence of excitation} \footnote{cf.,~\cite[Definition~13.2]{Ljung99}} of $z_{il}(k)$ will guarantee (\ref{eq:InformativeExperimentsCondition}). 

Since data are collected through a closed-loop experiment, special attention is needed for terms $z_{il}(k)$ which involve the product of the water flow rate $\dot{V}_{w,i}(k)$ with the output $y_i(k)$ of the system, such as the term $\dot{V}_{w,i}(k)y_{i}(k)$ in (\ref{eq:NonlinearRegressorMI}). Note, however, that this term is \emph{nonlinear} in both $y_i(k)$ and the presence/occupancy indicator, since the hysteresis controller a) applies only when someone is present in the zone and b) is a nonlinear function of the output $y_i(k)$. Thus, it is sufficient for either the input signal or the occupancy indicator to be persistently exciting in order for the experiment to be informative. In Figure~\ref{fig:Spectrum}, we provide the positive spectrum generated through the discrete Fast Fourier Transform (DFFT) of the disturbance and control signals.

\begin{figure}[h!]
\centering
\includegraphics{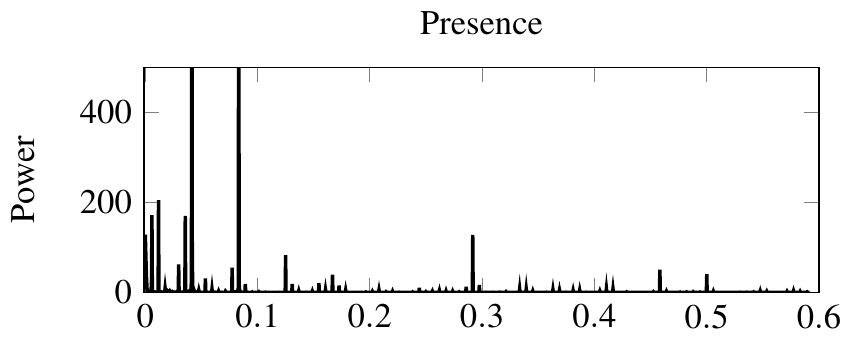}
\includegraphics{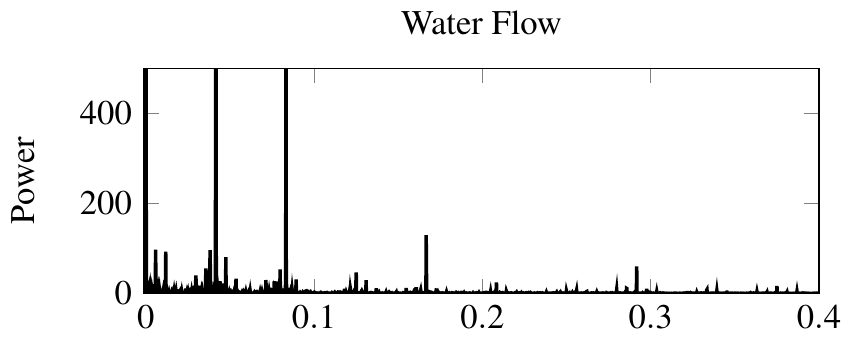}
\includegraphics{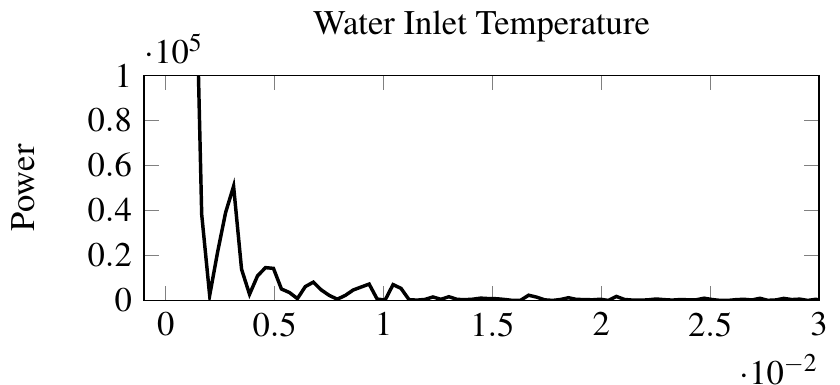}
\includegraphics{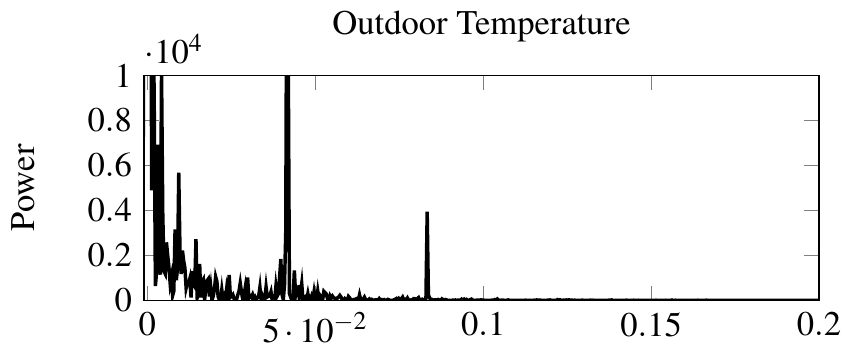}
\includegraphics{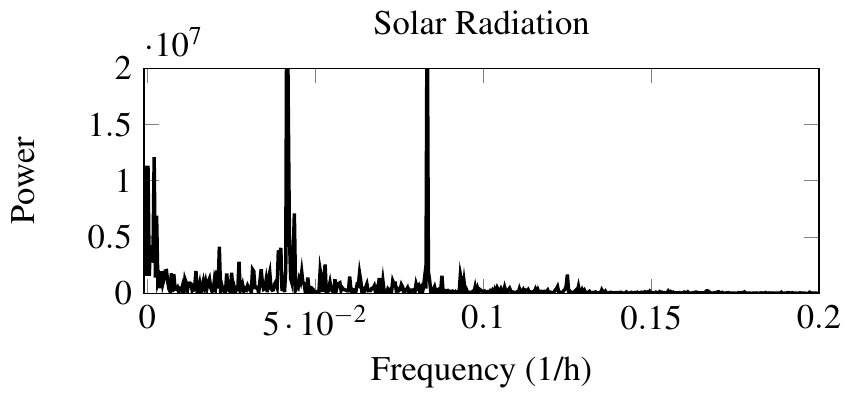}
\caption{Positive Spectrum of Input and Disturbance Signals.}
\label{fig:Spectrum}
\end{figure}

For either the simplified model of (\ref{eq:LinearRegressorMI}) or the more detailed model of (\ref{eq:NonlinearRegressorMI}), note that any transfer function (from an input/disturbance to the zone temperature) requires the identification of at most $2(\magn{\mathcal{N}_i}+2)$ parameters. Thus, in order for an experiment to be informative enough, it is sufficient that the considered inputs/disturbances are persistently exciting of order $2(\magn{\mathcal{N}_i}+2)$ or higher. In the case of $\magn{\mathcal{N}_i}=1$ (considered in this experiment), it suffices that any input/disturbance signal is persistently exciting of order $6$ (or $5$ in case one of the frequencies is at zero, as it is the case for all signals in Figure~\ref{fig:Spectrum}). As demonstrated in Figure~\ref{fig:Spectrum}, this condition is satisfied by the input/disturbance signals, since for all of them the \emph{positive} spectrum is non-zero in at least $3$ distinct frequencies.

\subsection{Identification error comparison}		\label{sec:IdentificationErrorComparison}

In Figure~\ref{fig:IdentificationComparison}, we demonstrate the resulting identification error under the LRM of (\ref{eq:LinearRegressorMI}) and the NRM of (\ref{eq:NonlinearRegressorMI}) for $\magn{\mathcal{N}_i}=1$. Note that the NRM achieves a smaller identification error of about $10\%$. This reduced error is observed later in time (due to the larger training time required by the larger number of terms used in the nonlinear regression vector). The data used for training correspond to data collected between October and April from the simulated building, however the same data have been reused several times for better fitting. Thus, the time axis in Figure~\ref{fig:IdentificationComparison} does \emph{not} correspond to the actual simulation time, but it corresponds to the accumulated time index of the reused data.

\begin{figure}[h!]
\centering
\includegraphics{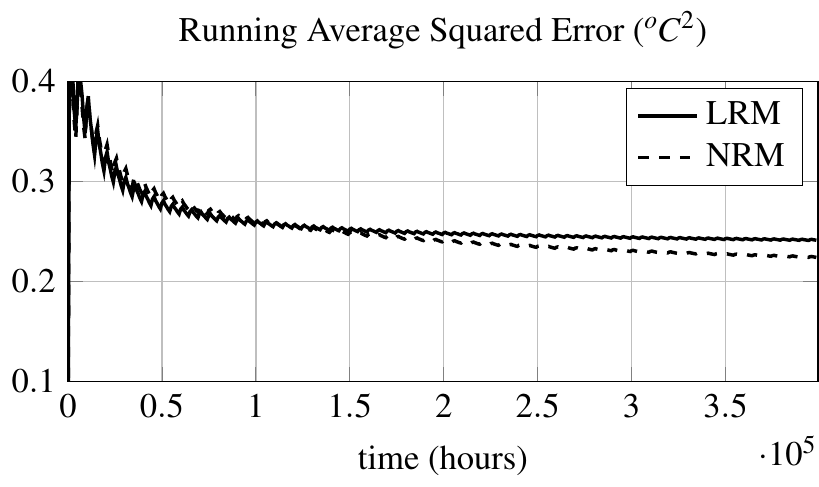}
\caption{Identification performance for the zone temperature.}
\label{fig:IdentificationComparison}
\end{figure}

\subsection{Prediction performance comparison}	\label{sec:PredictionPerformanceComparison}

Small identification error may not necessarily guarantee a good prediction performance, especially when predictions are required for several hours ahead. From the one hand, training for system identification is based upon the one-step ahead innovations, which does not necessarily guarantee a similar accurate prediction for several hours ahead. On the other hand, training might have been performed under operating conditions that are (possibly) different from the operating conditions under which predictions are requested (e.g., weather conditions may dramatically change within a few days). This may potentially result in a significant degradation of the prediction performance when the prediction has not taken into account the effect of the nonlinearities present in the dynamics. The effect of these nonlinearities may not be known a-priori and a thorough investigation is necessary to understand their potential economic impact.

To this end, we designed a standard MPC for the RH system of the main living area of the residential building. The goal is to evaluate the performance of the previously designed (nonlinear) regression vectors for the zone temperature prediction compared to standard (black-box) linear regression vectors. The structure of the MPC is rather simple and addresses the following optimization problem.

\begin{subequations}	\label{eq:MPCformulation}
\begin{align}
\min &&& \alpha\sum_{k=0}^{N_{\rm hor}}\Big\{\hat{p}_i(k) \left(\hat{T}_{r,i}(k)-T_{\rm set}(k)\right)^2/N_{\rm hor}\Big\} + \nonumber \\ 
&&& \sum_{k=0}^{N_{\rm hor}-1}\Big\{ \beta T_{\rm sam} \left(T_{w,i}^{+}(k)-\hat{T}_{w,i}^{-}(k)\right)\Big) + \gamma T_{\rm sam} \dot{V}_{w,i}(k)  \Big\} \label{eq:MPCObjectiveFunction} \\ 
\mbox{s.t.} &&& \hat{T}_{r,i}(k) \approx \varphi_{r,i}(k)\tr \theta_{r,i} \label{eq:MPCTemperaturePrediction} \\ 
&&& \hat{T}_{w,i}^{-}(k) \approx \hat{T}_{w,i}(k) \approx \varphi_{w,i}(k)\cdot\theta_{w,i} \\
\mbox{var.} &&& T_{w,i}^{+}(k)\in\{40^{o}C,45^{o}C\}, \\
&&& \dot{V}_{w,i}(k)\in\{0,\dot{V}_{w,{\rm max}}\}, \\
&&& k=0,1,2,...,N_{\rm hor}-1, \nonumber
\end{align}
\end{subequations}


Note that the first part of the objective function (\ref{eq:MPCObjectiveFunction}) corresponds to a \emph{comfort} cost. It measures the average squared difference of the zone temperature from the desired (or set) temperature entered by the user at time $k$. The set temperature was set equal to $21^oC$ throughout the optimization horizon. The variable $\hat{p}_i(k)\in\{0,1\}$ holds our estimates on whether people are present in zone $i$ at time instance $k$. 

The second part of the objective function (\ref{eq:MPCObjectiveFunction}) corresponds to the \emph{heating cost}, while the third part corresponds to the \emph{pump-electricity cost}. The nonnegative parameters $\beta$, $\gamma$ were previously identified for the heating system of the simulated building and take values: $\beta= 0.3333 kW/^oCh$, $\gamma=0.5278\cdot{10}^{3} kWsec/h m^3 $. The non-negative constant $\alpha$ is introduced to allow for adjusting the importance of the comfort cost compared to the energy cost. A large value has been assigned equal to $10^6$ to enforce high comfort.

The $\hat{p}_i(k)$, $k=1,2,...,$, as well as the outdoor temperature, $T_{\rm out}$, and the solar gain, $\dot{Q}_{\rm ext}(k)$, are assumed given (i.e., predicted with perfect accuracy). This assumption is essential in order to evaluate precisely the impact of our temperature predictions (\ref{eq:MPCTemperaturePrediction}) in the performance of the derived optimal controller. 

The sampling period was set to $T_{\rm sam} = \nicefrac{1}{12}h$, the optimization period was set to $T_{\rm opt}=1h$, and the optimization horizon was set to $T_{\rm hor}=5h$. This implies that $N_{\rm hor}=5 \cdot 12 = 60$. Furthermore, the control variables are the inlet water temperature which assumes two values ($40^oC$ and $45^oC$) and the water flow which assumes only two values, the minimum $0 kg/sec$ and the maximum $\dot{V}_{w,\max}=0.0787kg/sec$. 

For the prediction model of (\ref{eq:MPCTemperaturePrediction}), we used either the LRM of (\ref{eq:LinearRegressorMI}) or the NRM of (\ref{eq:NonlinearRegressorMI}), both trained offline using data collected during normal operation of a hysteresis controller (as demonstrated in detail in Section~\ref{sec:DataGeneration}). For the prediction model of the outlet water temperature used in the computation of the cost function, we used the prediction model derived in (\ref{eq:RHpredictorFI}). As in the case of Section~\ref{sec:IdentificationErrorComparison} where the LRM and the NRM were compared, we evaluated the value of the cost function under these two alternatives. The performances of the two models with respect to the comfort cost (which corresponds to the 1st term of the objective (\ref{eq:MPCObjectiveFunction})) are presented in Figure~\ref{fig:ComfortErrorComparison}. The performances of the two models with respect to the energy spent (i.e., heating and pump electricity cost), which correspond to the 2nd and 3rd term of the objective (\ref{eq:MPCObjectiveFunction}), are presented in Figure~\ref{fig:CostComparison}. Note that the NRM achieves lower comfort cost using less amount of energy which is an indication of small prediction errors.

\begin{figure}[th!]
\centering
\includegraphics{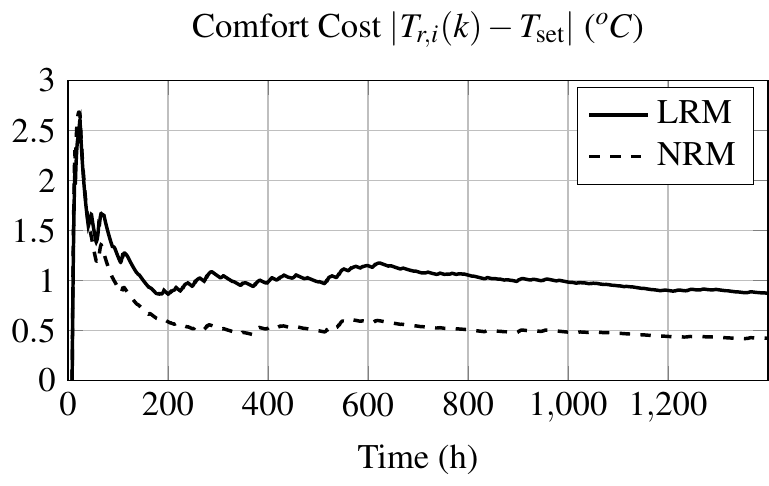}
\caption{Running-average comfort cost under LRM and NRM.}
\label{fig:ComfortErrorComparison}
\end{figure}
\begin{figure}[th!]
\centering
\includegraphics{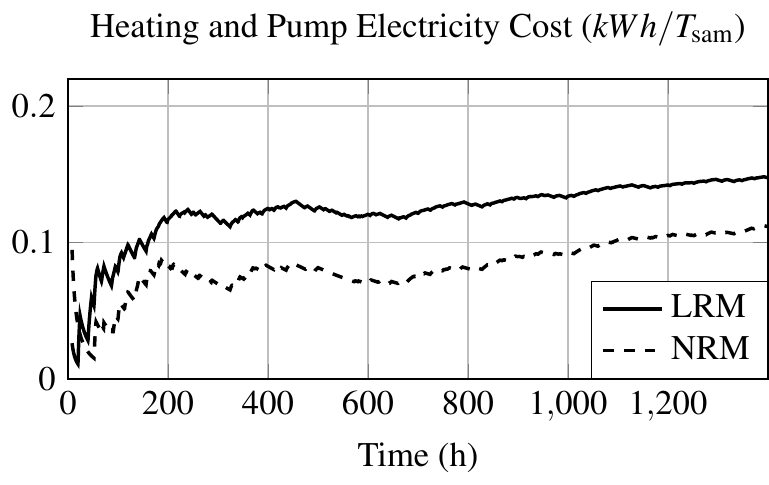}
\caption{Running-average heating and pump electricity cost under LRM and NRM.}
\label{fig:CostComparison}
\end{figure}

\subsection{Discussion}

Note that the NRM achieves an improvement to both prediction accuracy and cost performance. The fact that the LRM may not provide a similarly good performance as the NRM could be attributed to the fact that the operating conditions for testing the MPC controller might not be identical with the training conditions. Note, for example, that the training phase was performed under the heating-curve pattern of (\ref{eq:HeatingCurve}), while the MPC employs only two alternative water inlet temperatures. This variation in the inlet water temperature may have exhibited different patterns in the nonlinear parts of the dynamics, thus affecting the prediction performance. 

The simulation study presented in this paper considered a regression vector generated with $\magn{\mathcal{N}_i}=1$, which corresponds to the case of a single neighboring zone (e.g., the outdoor temperature). The proposed regression models are also applicable for higher-order models (i.e., when more than one neighboring zones are present), however it might be necessary that some of the input signals are designed to include larger number of frequencies. Note, however, that the input signals that can be designed (e.g., medium flow rates and inlet temperatures) appear as products with other disturbance variables in the derived regression models (e.g., equation (\ref{eq:NonlinearRegressorMI})). Thus, persistence of excitation for such products of non-correlated signals is an easier task. On the other hand, some of the disturbance signals (e.g., the solar radiation or the outdoor temperature) cannot be designed. Of course, the prediction model will operate at conditions that are similar to which it was identified. However, the impact of such disturbance signals when they are not informative enough needs to be further investigated.


\section{Conclusions} \label{sec:Conclusions}

We derived analytically regression models admitting nonlinear regressors specifically tailored for system identification and prediction for thermal dynamics in buildings. The proposed models were compared with standard (black-box) linear regression models derived through the OE structure, and an improvement was observed with respect to both the resulting prediction performance as well as the energy cost. The greatest advantage of the proposed identification scheme relies on the fact that it provides a richer and more accurate representation of the underlying physical phenomena, contrary to standard black-box identification schemes. 
\bibliographystyle{asmems4}

\bibliography{mpcEnergy_SystemID_Bibliography.bib}

\appendix

\section{Proof of Property~\ref{Pr:Property2}} \label{Ap:Property2}

We can write:
\begin{eqnarray*}
\lefteqn{[\QQ_{s,ij}\QQ_{wc,i}(\dot{V}_{w,i}(k))]\left\{x(k)\right\} } \cr & = & \QQ_{s,ij}\left\{\QQ_{wc,i}(\dot{V}_{w,i}(k))\left\{x(k)\right\}\right\} \cr 
& = & \QQ_{s,ij}\left\{ x(k) - (1 + \epsilon a_{wc,i}(\dot{V}_{w,i}(k)))x(k-1) \right\} \cr 
& = & x(k) - (1+\epsilon a_{wc,i}(\dot{V}_{w,i}(k)))x(k-1) - \cr && (1+\epsilon a_{s,ij})x(k-1) + (1+\epsilon a_{s,ij}) x(k-2) + \cr && (1+\epsilon a_{s,ij}) \epsilon a_{wc,i}(\dot{V}_{w,i}(k-1))x(k-2). 
\end{eqnarray*}
Similarly, we have:
\begin{eqnarray*}
\lefteqn{[\QQ_{wc,i}(\dot{V}_{w,i}(k))\QQ_{s,ij}]\left\{x(k)\right\} } \cr 
& = & x(k) - (1+\epsilon a_{wc,i}(\dot{V}_{w,i}(k)))x(k-1) - \cr && (1+\epsilon a_{s,ij})x(k-1) + (1+\epsilon a_{s,ij}) x(k-2) + \cr && (1+\epsilon a_{s,ij}) \epsilon a_{wc,i}(\dot{V}_{w,i}(k))x(k-2). 
\end{eqnarray*}
Thus, we have:
\begin{eqnarray*}
\lefteqn{[\QQ_{s,ij}\QQ_{wc,i}(\dot{V}_{w,i}(k))]\left\{x(k)\right\} = } \cr  && [\QQ_{wc,i}(\dot{V}_{w,i}(k))\QQ_{s,ij}]\left\{x(k)\right\} - \cr && (1+\epsilon a_{s,ij})\epsilon\left( a_{wc,i}(\dot{V}_{w,i}(k)) - a_{wc,i}(\dot{V}_{w,i}(k-1)) \right) x(k-2).
\end{eqnarray*}
Finally, we may write:
\begin{eqnarray}
\lefteqn{[\QQ_{s,ij}\QQ_{wc,i}(\dot{V}_{w,i}(k))] = } \cr  && [\QQ_{wc,i}(\dot{V}_{w,i}(k))\QQ_{s,ij}] - \cr && (1+\epsilon a_{s,ij})\epsilon \left[(1-\qq^{-1})a_{wc,i}(\dot{V}_{w,i}(k)) \right] \qq^{-2}.
\end{eqnarray}

\section{Proof of Proposition~\ref{Pr:NRM:ZoneDynamicsFI}} \label{Ap:NRM:ZoneDynamicsFI}

A Taylor-series expansion of the continuous-time zone dynamics leads to the following approximation:
\begin{eqnarray}	\label{eq:NRM:ZoneDynamicsFI1}
\lefteqn{T_{r,i}(k) \approx }\cr && \PP_{r,i}(\dot{V}_{a,i}(k-1))\{T_{r,i}(k)\} +  \epsilon\sum_{j\in\mathcal{N}_i}a_{rs,ij}^{+}T_{s,ij}(k-1) + \cr && \epsilon a_{rw,i}T_{w,i}(k-1) + \epsilon a_{ra,i}(\dot{V}_{a,i}(k-1))T_{a,i}^{+}(k-1) + \cr && \epsilon a_{{\rm ext},i}\dot{Q}_{{\rm ext},i}(k-1),
\end{eqnarray}
plus higher-order terms of $\epsilon$. Furthermore, note that the finite response of the separator-dynamics can be approximated by:
\begin{eqnarray}	\label{eq:NRM:ZoneDynamicsFI1_approximation}
T_{s,ij}(k) \approx \QQ_{s,ij}^{-1}\Big\{ \epsilon a_{s,ij}^{+}T_{r,i}(k-1) + \epsilon a_{s,ij}^{-}T_{r,j}(k-1)\Big\}
\end{eqnarray}
plus higher-order terms of $\epsilon$.
Given that the family of operators $\QQ_{s,ij}$, $j\in\mathcal{N}_i$ are pairwise commutative by Property~\ref{Pr:Property1}, if we replace $T_{s,ij}$ into (\ref{eq:NRM:ZoneDynamicsFI1}) and we multiply both sides of the above expression by the composite operator $\QQ_{s,ij_1}\cdots\mathcal{Q}_{s,ij_{\magn{\mathcal{N}_i}}}$, briefly denoted by $\prod_{j\in\mathcal{N}_i}\QQ_{s,ij}\{\cdot\}$, we have:
\begin{eqnarray*}	\label{eq:NRM:ZoneDynamicsFI2}
\lefteqn{T_{r,i}(k) \approx} \cr && \Big[1-\prod_{j\in\mathcal{N}_i}\QQ_{s,ij}\Big]\Big\{T_{r,i}(k)\Big\} + \cr && \Big[\prod_{j\in\mathcal{N}_i}\QQ_{s,ij}\Big]\left\{\PP_{r,i}(\dot{V}_{a,i}(k-1))\{T_{r,i}(k)\}\right\} + \cr && \epsilon^2\sum_{j\in\mathcal{N}_i}a_{rs,ij}^{+}\Big[\prod_{\ell\neq{j}}\QQ_{s,i\ell}\Big]\left\{a_{s,ij}^{+}T_{r,i}(k-2)\right\}+ \cr && 
\epsilon^2\sum_{j\in\mathcal{N}_i}a_{rs,ij}^{+}\Big[\prod_{\ell\neq{j}}\QQ_{s,i\ell}\Big]\left\{a_{s,ij}^{-}T_{r,j}(k-2)\right\} + \cr && \epsilon a_{rw,i}\Big[\prod_{j\in\mathcal{N}_i}\QQ_{s,ij}\Big]\Big\{T_{w,i}(k-1)\Big\} + \cr && \epsilon \Big[\prod_{j\in\mathcal{N}_i}\QQ_{s,ij}\Big]\left\{ a_{ra,i}(\dot{V}_{a,i}(k-1))T_{a,i}^{+}(k-1)\right\} + \cr && \epsilon a_{{\rm ext},i}\Big[\prod_{j\in\mathcal{N}_i}\QQ_{s,ij}\Big]\left\{\dot{Q}_{{\rm ext},i}(k-1)\right\},
\end{eqnarray*}
plus higher-order terms of $\epsilon$. 
The conclusion follows directly by using Property~\ref{Pr:Property3} and expanding the terms of the above expression. 

\section{Proof of Proposition~\ref{Pr:NRM:ZoneDynamicsMI}} \label{Ap:NRM:ZoneDynamicsMI}

Under the MI structure, the finite-response approximation of (\ref{eq:NRM:ZoneDynamicsFI1}) continues to hold for sufficiently small step-size sampling interval $\epsilon$. Note that the separator dynamics can still be approximated according to Equation~(\ref{eq:NRM:ZoneDynamicsFI1_approximation}).
Furthermore, a Taylor-series expansion of the water dynamics leads to the following finite-response approximation:
\begin{eqnarray*}
\lefteqn{T_{w,i}(k) \approx \QQ_{wc,i}^{-1}(\dot{V}_{w,i}(k-1)) }\cr && 
\Big\{\epsilon a_{ww,i}(\dot{V}_{w,i}(k-1))T_{w,i}^{+}(k-1) +  \epsilon a_{w,i}T_{r,i}(k-1)\Big\}
\end{eqnarray*}
plus higher-order terms of $\epsilon$. If we replace $T_{s,ij}$ and $T_{w,i}$  into (\ref{eq:NRM:ZoneDynamicsFI1}) and we apply to both sides of the resulting expression the operator $\QQ_{wc,i}(\dot{V}_{w,i}(k-2))$, we get
\begin{eqnarray*}
\lefteqn{\QQ_{wc,i}(\dot{V}_{w,i}(k-2))\Big\{T_{r,i}(k)\Big\} \approx } \cr 
&& \QQ_{wc,i}(\dot{V}_{w,i}(k-2))\left\{\PP_{r,i}(\dot{V}_{a,i}(k-1))\{T_{r,i}(k)\}\right\} + \cr 
&& \epsilon^2\sum_{j\in\mathcal{N}_i}a_{rs,ij}^{+}\QQ_{wc,i}(\dot{V}_{w,i}(k-2))\QQ_{s,ij}^{-1}\left\{a_{s,ij}^{+}T_{r,i}(k-2)\right\}+ \cr 
&& \epsilon^2\sum_{j\in\mathcal{N}_i}a_{rs,ij}^{+}\QQ_{wc,i}(\dot{V}_{w,i}(k-2))\QQ_{s,ij}^{-1}\left\{a_{s,ij}^{-}T_{r,j}(k-2)\right\} + \cr 
&& \epsilon^2 a_{rw,i}a_{ww,i}(\dot{V}_{w,i}(k-2))T_{w,i}^{+}(k-2) + \cr
&& \epsilon^2 a_{rw,i}a_{w,i}T_{r,i}(k-2) + \cr
&& \epsilon \QQ_{wc,i}(\dot{V}_{w,i}(k-2))\left\{ a_{ra,i}(\dot{V}_{a,i}(k-1))T_{a,i}^{+}(k-1)\right\} + \cr 
&& \epsilon a_{{\rm ext},i} \QQ_{wc,i}(\dot{V}_{w,i}(k-2))\left\{\dot{Q}_{{\rm ext},i}(k-1)\right\}
\end{eqnarray*}
plus higher-order terms of $\epsilon$. Applying to both sides of the above expression the composite operator $\QQ_{s,ij_1}\cdots\mathcal{Q}_{s,ij_{\magn{\mathcal{N}_i}}}$, briefly denoted by $\prod_{j\in\mathcal{N}_i}\QQ_{s,ij}\{\cdot\}$, and making use of Property~\ref{Pr:Property1}, we have:
\begin{eqnarray*}
\lefteqn{\prod_{j\in\mathcal{N}_i}\QQ_{s,ij}\QQ_{wc,i}(\dot{V}_{w,i}(k-2))\Big\{T_{r,i}(k)\Big\} \approx } \cr 
&& \prod_{j\in\mathcal{N}_i}\QQ_{s,ij}\QQ_{wc,i}(\dot{V}_{w,i}(k-2))\left\{\PP_{r,i}(\dot{V}_{a,i}(k-1))\{T_{r,i}(k)\}\right\} + \cr 
&& \epsilon^2\sum_{j\in\mathcal{N}_i}a_{rs,ij}^{+}\prod_{\ell\in\mathcal{N}_i\backslash{j}}\QQ_{s,i\ell}[\QQ_{s,ij}\QQ_{wc,i}(\dot{V}_{w,i}(k-2))]\QQ_{s,ij}^{-1} \cr 
&& \left\{a_{s,ij}^{+}T_{r,i}(k-2)\right\} + \cr 
&& \epsilon^2\sum_{j\in\mathcal{N}_i}a_{rs,ij}^{+}\prod_{\ell\in\mathcal{N}_i\backslash{j}}\QQ_{s,i\ell}[\QQ_{s,ij}\QQ_{wc,i}(\dot{V}_{w,i}(k-2))]\QQ_{s,ij}^{-1} \cr 
&& \left\{a_{s,ij}^{-}T_{r,j}(k-2)\right\} + \cr 
&& \epsilon^2 a_{rw,i}\prod_{j\in\mathcal{N}_i}\QQ_{s,ij}\left\{a_{ww,i}(\dot{V}_{w,i}(k-2))T_{w,i}^{+}(k-2)\right\}  + \cr
&& \epsilon^2 a_{rw,i}a_{w,i}\prod_{j\in\mathcal{N}_i}\QQ_{s,ij}\left\{T_{r,i}(k-2)\right\} + \cr
&& \epsilon \prod_{j\in\mathcal{N}_i}\QQ_{s,ij} \QQ_{wc,i}(\dot{V}_{w,i}(k-2)) \cr 
&& \left\{ a_{ra,i}(\dot{V}_{a,i}(k-1))T_{a,i}^{+}(k-1)\right\} + \cr 
&& \epsilon a_{{\rm ext},i} \prod_{j\in\mathcal{N}_i}\QQ_{s,ij} \QQ_{wc,i}(\dot{V}_{w,i}(k-2))\left\{\dot{Q}_{{\rm ext},i}(k-1)\right\} 
\end{eqnarray*}
plus higher-order terms of $\epsilon$. According to Property~\ref{Pr:Property2}, we may perform the substitution:
\begin{eqnarray*}
\lefteqn{[\QQ_{s,ij}\QQ_{wc,i}(\dot{V}_{w,i}(k-2))] =}\cr && [\QQ_{wc,i}(\dot{V}_{w,i}(k-2))\QQ_{s,ij}] - \cr &&
(1+\epsilon a_{s,ij})\epsilon \left[(1-\qq^{-1})a_{wc,i}(\dot{V}_{w,i}(k-2))\right]\qq^{-2}.
\end{eqnarray*}
Thus, we may write:
\begin{eqnarray}	\label{eq:ApproximationMI:Equation1}
\lefteqn{\prod_{j\in\mathcal{N}_i}\QQ_{s,ij}\QQ_{wc,i}(\dot{V}_{w,i}(k-2))\QQ_{r,i}(\dot{V}_{a,i}(k-1))\Big\{T_{r,i}(k)\Big\} \approx} \cr 
&& \epsilon^2\sum_{j\in\mathcal{N}_i}a_{rs,ij}^{+}\prod_{\ell\in\mathcal{N}_i\backslash{j}}\QQ_{s,i\ell}\QQ_{wc,i}(\dot{V}_{w,i}(k-2)) \cr 
&& \left\{a_{s,ij}^{+}T_{r,i}(k-2)\right\}+ \cr 
&& \epsilon^2\sum_{j\in\mathcal{N}_i}a_{rs,ij}^{+}\prod_{\ell\in\mathcal{N}_i\backslash{j}}\QQ_{s,i\ell}\QQ_{wc,i}(\dot{V}_{w,i}(k-2)) \cr 
&& \left\{a_{s,ij}^{-}T_{r,j}(k-2)\right\} + \cr 
&& \epsilon^2 a_{rw,i}\prod_{j\in\mathcal{N}_i}\QQ_{s,ij}\left\{a_{ww,i}(\dot{V}_{w,i}(k-2))T_{w,i}^{+}(k-2)\right\} +  \cr
&& \epsilon^2 a_{rw,i} a_{w,i} \prod_{j\in\mathcal{N}_i} \QQ_{s,ij}\left\{T_{r,i}(k-2)\right\} + \cr
&& \epsilon \prod_{j\in\mathcal{N}_i}\QQ_{s,ij} \QQ_{wc,i}(\dot{V}_{w,i}(k-2)) \cr 
&& \left\{ a_{ra,i}(\dot{V}_{a,i}(k-1))T_{a,i}^{+}(k-1)\right\} + \cr 
&& \epsilon a_{{\rm ext},i} \prod_{j\in\mathcal{N}_i}\QQ_{s,ij} \QQ_{wc,i}(\dot{V}_{w,i}(k-2))\left\{\dot{Q}_{{\rm ext},i}(k-1)\right\},
\end{eqnarray}
plus higher-order terms of $\epsilon$. In the following, we will approximate the terms of the above expression, by neglecting terms of order of $\varepsilon^3$ or higher. 

Note that
\begin{eqnarray}	\label{eq:ApproximationMI}
\lefteqn{\QQ_{wc,i}(\dot{V}_{w,i}(k-2))\QQ_{r,i}(\dot{V}_{a,i}(k-1))\{T_{r,i}(k)\} = } \cr 
&& T_{r,i}(k) - 2T_{r,i}(k-1) + T_{r,i}(k-2) - \cr && 
\epsilon a_{r,i}(\dot{V}_{a,i}(k-1))T_{r,i}(k-1) - \cr &&
\epsilon a_{wc,i}(\dot{V}_{w,i}(k-2)) T_{r,i}(k-1) + \cr && 
\epsilon a_{wc,i}(\dot{V}_{w,i}(k-2))T_{r,i}(k-2) + \cr &&
\epsilon a_{r,i}(\dot{V}_{a,i}(k-2)) T_{r,i}(k-2) + \cr &&
\epsilon^2 a_{wc,i}(\dot{V}_{w,i}(k-2))a_{r,i}(\dot{V}_{a,i}(k-2))T_{r,i}(k-2).
\end{eqnarray}
The following also hold:
\begin{align*}
&\QQ_{wc,i}(\dot{V}_{w,i}(k-2))\left\{a_{s,ij}^{+}T_{r,i}(k-2)\right\}  \cr 
= & a_{s,ij}^{+} T_{r,i}(k-2) - (1+\epsilon a_{wc,i}(\dot{V}_{w,i}(k-2)))a_{s,ij}^{+}T_{r,i}(k-3) \cr 
\approx & a_{s,ij}^{+} T_{r,i}(k-2) - a_{s,ij}^{+}T_{r,i}(k-3) \cr\cr
%
%
& \QQ_{wc,i}(\dot{V}_{w,i}(k-2))\left\{a_{ra,i}(\dot{V}_{a,i}(k-1)T_{a,i}^+(k-1)\right\} \cr 
= & a_{ra,i}(\dot{V}_{a,i}(k-1))T_{a,i}^{+}(k-1) - \cr 
& (1+\epsilon a_{wc,i}(\dot{V}_{w,i}(k-2)))a_{ra,i}(\dot{V}_{a,i}(k-2))T_{a,i}^{+}(k-2) \cr\cr
%
& \QQ_{wc,i}(\dot{V}_{w,i}(k-2))\left\{\dot{Q}_{{\rm ext},i}(k-1)\right\} \cr 
= & \dot{Q}_{{\rm ext},i}(k-1) -  
(1 + \epsilon a_{wc,i}(\dot{V}_{w,i}(k-2)))\dot{Q}_{{\rm ext},i}(k-2) 
\end{align*}
where the first expression ignores terms of $\epsilon$ since, in (\ref{eq:ApproximationMI:Equation1}), it multiplies an expression of order of $\varepsilon^2$. Using the above approximations and Property~\ref{Pr:Property3}, the terms of (\ref{eq:ApproximationMI:Equation1}) can be approximated as follows:
\begin{align*}
&\prod_{j\in\mathcal{N}_i}\QQ_{s,ij}\QQ_{wc,i}(\dot{V}_{w,i}(k-2))\QQ_{r,i}(\dot{V}_{a,i}(k-1))\Big\{T_{r,i}(k)\Big\} \cr = & T_{r,i}(k) + \sum_{m=1}^{\magn{\mathcal{N}_i}+2}\alpha_{1,m}^{(1)} T_{r,i}(k-m) + \cr 
& \sum_{m=1}^{\magn{\mathcal{N}_{i}}+2}\alpha_{2,m}^{(1)} \dot{V}_{a,i}(k-m)T_{r,i}(k-m) + \cr
& \sum_{m=1}^{\magn{\mathcal{N}_{i}}+1}\alpha_{3,m}^{(1)} \dot{V}_{w,i}(k-1-m)T_{r,i}(k-m) + \cr 
& \sum_{m=2}^{\magn{\mathcal{N}_{i}}+2}\alpha_{4,m}^{(1)} \dot{V}_{w,i}(k-m)T_{r,i}(k-m) + \cr
& \sum_{m=2}^{\magn{\mathcal{N}_{i}}+2}\alpha_{5,m}^{(1)} \dot{V}_{w,i}(k-m)\dot{V}_{a,i}(k-m)T_{r,i}(k-m),\cr\cr
& \prod_{\ell\in\mathcal{N}_i\backslash{j}}\QQ_{s,i\ell}\QQ_{wc,i}(\dot{V}_{w,i}(k-2))\left\{a_{s,ij}^{+}T_{r,i}(k-2)\right\} \cr 
\approx & \sum_{m=2}^{\magn{\mathcal{N}_i}+2}\alpha_{1,m}^{(2)}T_{r,i}(k-m), 
\cr\cr
%
%
& \prod_{j\in\mathcal{N}_i}\QQ_{s,ij}\left\{a_{ww,i}(\dot{V}_{w,i}(k-2))T_{w,i}^{+}(k-2)\right\} \cr 
= & \sum_{m=2}^{\magn{\mathcal{N}_i}+2}\alpha_{1,m}^{(4)}\dot{V}_{w,i}(k-m)T_{w,i}^{+}(k-m), \cr\cr
& \prod_{j\in\mathcal{N}_i}\QQ_{s,ij}\left\{T_{r,i}^{+}(k-2)\right\} \cr 
= & \sum_{m=2}^{\magn{\mathcal{N}_i}+2}\alpha_{1,m}^{(5)}T_{r,i}^{+}(k-m), \cr\cr
& \prod_{j\in\mathcal{N}_i}\QQ_{s,ij} \QQ_{wc,i}(\dot{V}_{w,i}(k-2))\left\{ a_{ra,i}(\dot{V}_{a,i}(k-1))T_{a,i}^{+}(k-1)\right\} \cr
= & \sum_{m=1}^{\magn{\mathcal{N}_i}+2}\alpha_{1,m}^{(5)}\dot{V}_{a,i}(k-m)T_{a,i}^{+}(k-m) + \cr 
& \sum_{m=2}^{\magn{\mathcal{N}_i}+2}\alpha_{2,m}^{(5)}\dot{V}_{w,i}(k-m)\dot{V}_{a,i}(k-m)T_{a,i}^{+}(k-m), \cr\cr
&\prod_{j\in\mathcal{N}_i}\QQ_{s,ij} \QQ_{wc,i}(\dot{V}_{w,i}(k-2))\left\{\dot{Q}_{{\rm ext},i}(k-1)\right\} \cr
= & \sum_{m=1}^{\magn{\mathcal{N}_i}+2}\alpha_{1,m}^{(6)}\dot{Q}_{{\rm ext},i}(k-m) + \cr 
& \sum_{m=2}^{\magn{\mathcal{N}_i}+2}\alpha_{2,m}^{(6)}\dot{V}_{w,i}(k-m)\dot{Q}_{{\rm ext},i}(k-m).
\end{align*}
for some constant parameters $\alpha_{\times,m}^{(\times)}\in\mathbb{R}$.

It is straightforward to check that using the above approximations, Equation~(\ref{eq:ApproximationMI:Equation1}) can be written as a linear regression with a nonlinear regression vector of the form depicted in Equation~(\ref{eq:NonlinearRegressorMI}).



\end{document}